\documentclass{llncs}

\usepackage{t1enc}
\usepackage{amsmath}
\usepackage{latexsym}
\usepackage{theorem}
\usepackage{amssymb}
\usepackage{url}




\newcommand{\version}[2]{#1}
\newcommand{\comment}[1]{}

\newcommand{\hsp}[1][3ex]{\hspace*{#1}}
\newcommand{\vsp}[1][1mm]{\vspace*{#1}}
\newcommand{\moins}{\setminus}
\newcommand{\vide}{\emptyset}
\newcommand{\eg}{{\em e.g.} }
\newcommand{\ie}{{\em i.e.} }




\newcommand{\dom}{\mr{dom}}

\renewcommand{\a}{\rightarrow}
\newcommand{\A}{\Rightarrow}

\renewcommand{\to}{\mapsto}

\newcommand{\ab}{\a_\b}

\newcommand{\ar}{\a_\cR}

\newcommand{\ps}[1]{{\langle #1\rangle}}

\newcommand{\I}[1]{[\![#1]\!]}


\newcommand{\ex}{\exists}
\newcommand{\all}{\forall}

\renewcommand{\th}{\vdash}

\newcommand{\sle}{\subseteq}




\renewcommand{\o}[1]{{\overline{#1}}}
\renewcommand{\u}[1]{{\underline{#1}}}


\renewcommand{\b}{\beta}

\newcommand{\G}{\Gamma}
\renewcommand{\d}{\delta}
\newcommand{\D}{\Delta}
\newcommand{\ep}{\epsilon}
\newcommand{\vep}{\varepsilon}
\newcommand{\z}{\zeta}
\renewcommand{\t}{\theta}
\newcommand{\T}{\Theta}
\newcommand{\io}{\iota}

\renewcommand{\l}{\lambda}
\renewcommand{\L}{\Lambda}
\renewcommand{\r}{\rho}
\newcommand{\s}{\sigma}
\renewcommand{\S}{\Sigma}

\newcommand{\vphi}{\varphi}
\newcommand{\w}{\omega}


\newcommand{\mc}{\mathcal}

\newcommand{\mr}{\mathrm}
\newcommand{\mb}{\mathbb}

\newcommand{\mk}{\mathfrak}
\newcommand{\mf}{\mathsf}


\newcommand{\bB}{\mb{B}}

\newcommand{\bE}{\mb{E}}
\newcommand{\bF}{\mb{F}}

\newcommand{\bM}{\mb{M}}
\newcommand{\bN}{\mb{N}}

\newcommand{\bT}{\mb{T}}

\newcommand{\cA}{\mc{A}}
\newcommand{\cB}{\mc{B}}

\newcommand{\cF}{\mc{F}}

\newcommand{\cM}{\mc{M}}
\newcommand{\cN}{\mc{N}}

\newcommand{\cR}{\mc{R}}
\newcommand{\cS}{\mc{S}}
\newcommand{\cT}{\mc{T}}

\newcommand{\cX}{\mc{X}}

\newcommand{\cZ}{\mc{Z}}

\newcommand{\ka}{\mk{a}}
\newcommand{\kb}{\mk{b}}

\newcommand{\kt}{\mk{t}}
\newcommand{\ku}{\mk{u}}

\newcommand{\kA}{\mk{A}}


\newcommand{\fb}{\mf{b}}
\newcommand{\fc}{\mf{c}}
\newcommand{\fd}{\mf{d}}
\newcommand{\fe}{\mf{e}}
\newcommand{\ff}{\mf{f}}
\newcommand{\fg}{\mf{g}}
\newcommand{\fh}{\mf{h}}

\newcommand{\fj}{\mf{j}}

\newcommand{\fs}{\mf{s}}

\newcommand{\fA}{\mf{A}}
\newcommand{\fB}{\mf{B}}
\newcommand{\fC}{\mf{C}}

\newcommand{\fF}{\mf{F}}

\newcommand{\fN}{\mf{N}}
\newcommand{\fO}{\mf{O}}

\newcommand{\fT}{\mf{T}}

\newcommand{\va}{{\vec{a}}}
\newcommand{\vb}{{\vec{b}}}

\newcommand{\vl}{{\vec{l}}}
\newcommand{\vm}{{\vec{m}}}

\newcommand{\vp}{{\vec{p}}}

\newcommand{\vt}{{\vec{t}}}
\newcommand{\vu}{{\vec{u}}}
\newcommand{\vv}{{\vec{v}}}

\newcommand{\vx}{{\vec{x}}}
\newcommand{\vy}{{\vec{y}}}
\newcommand{\vz}{{\vec{z}}}


\newcommand{\vB}{{\vec{B}}}

\newcommand{\vP}{{\vec{P}}}

\newcommand{\vT}{{\vec{T}}}
\newcommand{\vU}{{\vec{U}}}
\newcommand{\vV}{{\vec{V}}}



\newenvironment{rul}
  {$\begin{array}{rcll}}
  {\end{array}$}

\newenvironment{rew}[1][~~\a~~]
  {$\begin{array}{r@{#1}ll}}
  {\end{array}$}
\newenvironment{rewc}[1][~~\a~~]
  {\begin{center}\begin{rew}[#1]}
  {\end{rew}\end{center}}



{\theorembodyfont{\rmfamily} 

}


\leftmargini=4mm
\leftmarginii=3mm
\leftmarginiii=3mm
\leftmarginiv=3mm
\leftmarginv=3mm
\leftmarginvi=3mm

\newenvironment{lstgeneric}[2]
  {\begin{list}{#1}{\topsep=.5mm\itemsep=.5mm\parsep=0mm%
    \itemindent=-3ex\labelsep=1ex\labelwidth=0ex #2}}
  {\end{list}}


\newcommand{\SN}{\cS\cN}

\newcommand{\pos}{\mr{Pos}}
\newcommand{\ind}{\mr{Ind}}
\newcommand{\mon}{\mr{Mon}}

\newcommand{\lx}{\l x}

\newcommand{\les}{\le_\cA}
\newcommand{\ges}{\ge_\cA}
\newcommand{\lts}{<_\cA}
\newcommand{\gts}{>_\cA}

\newcommand{\ltf}{<_\cF}
\newcommand{\gtf}{>_\cF}
\newcommand{\eqf}{\simeq_\cF}

\newcommand{\geb}{\le_\cB}
\newcommand{\ltb}{<_\cB}

\newcommand{\eqb}{\simeq_\cB}

\newcommand{\valpha}{{\vec\alpha}}
\newcommand{\vbeta}{{\vec\b}}

\newcommand{\thc}{\th_{\ff\va}}

\newcommand{\tf}{\tau_\ff}

\newcommand{\tg}{\tau_\fg}
\newcommand{\tc}{\tau_\fc}

\renewcommand{\max}{\mf{max}}
\renewcommand{\lim}{\mf{lim}}
\newcommand{\rec}{\mf{rec}}

\newcommand{\Set}{\mbox{\bf Set}}

\begin{document}

\title{\bf On the relation between sized-types based termination and semantic labelling}

\author{Fr\'ed\'eric Blanqui$^1$ \and Cody Roux$^2$ (INRIA)}

\institute{
FIT 3-604, Tsinghua University, Haidian District, Beijing 100084, China,
{\tt frederic.blanqui@inria.fr}\\
\and
LORIA\footnote{UMR 7503 CNRS-INPL-INRIA-Nancy2-UHP},
Pareo team, Campus Scientifique,
BP 239, 54506 Vandoeuvre-l\`es-Nancy, Cedex, France,
{\tt cody.roux@loria.fr}}

\maketitle

\begin{abstract}
We investigate the relationship between two independently developed
termination techniques. On the one hand, sized-types based termination
(SBT) uses types annotated with size expressions and Girard's
reducibility candidates, and applies on systems using constructor
matching only. On the other hand, semantic labelling transforms a
rewrite system by annotating each function symbol with the semantics
of its arguments, and applies to any rewrite system.

\hsp
First, we introduce a simplified version of SBT for the simply-typed
lambda-calculus. Then, we give new proofs of the correctness of SBT
using semantic labelling, both in the first and in the higher-order
case. As a consequence, we show that SBT can be extended to systems
using matching on defined symbols (\eg associative functions).
\end{abstract}

\section{Introduction}

Sized types were independently introduced by Hughes, Pareto and Sabry
{}\cite{hughes96popl} and Gim\'enez {}\cite{gimenez96phd}, and were
extended to richer type systems, to rewriting and to richer size
annotations by various researchers
{}\cite{xi01lics,abel08lmcs,barthe04mscs,blanqui04rta,blanqui06lpar-cbt}.

Sized types are types annotated with size expressions. For instance,
if $\fT$ is the type of binary trees then, for each $a\in\bN$, a type
$\fT^a$ is introduced to type the trees of height smaller or equal to
$a$. In the general case, the size is some ordinal related to the
interpretation of types in Girard's reducibility candidates
{}\cite{girard72phd}. However, as suggested in {}\cite{blanqui04rta},
other notions of sizes may be interesting.

These size annotations can then be used to prove the termination of
functions by checking that the size of arguments decreases along
recursive calls, but this applies to functions defined by using
matching on constructor terms only.

At about the same time, semantic labelling was introduced for
first-order systems by Zantema {}\cite{zantema95fi}. It received a lot
of attention in the last years
and was recently extended to the higher-order case by Hamana
{}\cite{hamana07ppdp}.

In contrast with SBT, semantic labelling is not a termination
criterion but transforms a system into another one whose termination
is equivalent and hopefully simpler to prove. The transformation
consists in annotating function symbols with the semantics of their
arguments in some model of the rewrite system. Finding a model may of
course be difficult. We will see that the notion of size used in SBT
provides such a model.

In this paper, we study the relationship between these two methods. In
particular, we give a new proof of the correctness of SBT using
semantic labelling. This will enable us to extend SBT to systems using
matching on defined symbols.


{\bf Outline.} Section {}\ref{sec-prelim} introduces our
notations. Section {}\ref{sec-sbt} explains what SBT is and Section
{}\ref{sec-sup} introduces a simplified version of it. To ease the
understanding of the paper, we first present the first-order case
which already contains the main ideas, and then consider the
higher-order case which requires more knowledge. Hence, in Section
{}\ref{sec-fosl} (resp. {}\ref{sec-hosl}), we recall what is semantic
labelling in the first (resp. higher) order case and show in Section
{}\ref{sec-fo} (resp. {}\ref{sec-ho}) that SBT is an instance of
it. For lack of space, some proofs are given in the Appendices of
{}\cite{blanqui09cslfull}.

\section{Preliminaries}
\label{sec-prelim}



{\bf First-order terms.} A {\em signature} $\cF$ is made of a set
$\cF_n$ of {\em function symbols} of {\em arity} $n$ for each
$n\in\bN$. Let $\cF$ be the set of all function symbols. Given a set
$\cX$ of {\em variables}, the set of {\em first-order terms}
$\cT(\cF,\cX)$ is defined as usual: $\cX\sle\cT$; if $\ff\in\cF_n$ and
$\vt$ is a sequence $t_1,\ldots,t_n\in\cT$ of length $n=|\vt|$, then
$\ff(\vt)\in\cT$.

An {\em $\cF$-algebra} $\cM$ is given by a set $M$ and, for each
symbol $\ff\in\cF_n$, a function $\ff^\cM:M^n\a M$. Given a valuation
$\mu:\cX\a M$, the interpretation of a term $t$ is defined as follows:
$\I{x}\mu=\mu(x)$ and
$\I{\ff(t_1,\ldots,t_n)}\mu=\ff^\cM(\I{t_1}\mu,\ldots,\I{t_n}\mu)$.

{\em Positions} are {\em words} on $\bN$. We denote by $\vep$ the {\em
empty} word and by $p\cdot q$ or $pq$ the concatenation of $p$ and
$q$. Given a term $t$, we denote by $t|_p$ the subterm of $t$ at
position $p$, and by $t[u]_p$ the replacement of this subterm by
$u$. Let $\pos(\ff,t)$ be the set of the positions of the occurrences
of $\ff$ in $t$.

\smallskip

{\bf Higher-order terms.} The set of (simple) {\em types} is
$\bT=\cT(\S)$ where $\S_0=\cB$ is a set of {\em base types},
$\S_2=\{\A\}$ and $\S_n=\vide$ otherwise. The sets of {\em positive}
and {\em negative positions in a type} are inductively defined as
follows:

\begin{itemize}
\item $\pos^+(\fB)=\vep$ and $\pos^-(\fB)=\vide$ for each $\fB\in\cB$,
\item $\pos^\d(T\A U)=1\cdot\pos^{-\d}(T)\cup2\cdot\pos^\d(U)$
where $--=+$ and $-+=-$.
\end{itemize}

Let $\cX$ be an infinite set of variables. A typing environment $\G$
is a map from a finite subset of $\cX$ to $\bT$. For each type $T$, we
assume given a set $\cF_T$ of {\em function symbols of type $T$}. The
sets $\L_T(\G)$ of {\em terms of type $T$ in $\G$} are defined as
usual: $\cF_T\sle\L_T(\G)$; if $(x,T)\in\G$ then $x\in\L_T(\G)$; if
$t\in\L_U(\G,x:T)$, then $\lx^Tt\in\L_{T\A U}(\G)$; if $t\in\L_{U\A
  V}(\G)$ and $u\in\L_U(\G)$, then $tu\in\L_V(\G)$.

Let $\cF$ (resp. $\L$) be the set of all function symbols
(resp. terms). \comment{As usual, terms are considered up to renaming
  of bound variables.} Let $\cX(t)$ be the set of {\em free variables}
of $t$. A {\em substitution} $\s$ is a map from a finite subset of
$\cX$ to $\L$. We denote by $(_x^u)$ the substitution mapping $x$ to
$u$, and by $t\s$ the application of $\s$ to $t$. A term $t$ {\em
  $\b$-rewrites} to a term $u$, written $t\ab u$, if there is
$p\in\pos(t)$ such that $t|_p=(\l x^Tv)w$ and $u=t[v_x^w]_p$.

A {\em rewrite rule} is a pair of terms $l\a r$ {\em of the same type}
such that $\cX(r)\sle\cX(l)$. A {\em rewrite system} is a set $\cR$ of
rewrite rules. A term $t$ {\em rewrites} to a term $u$, written $t\ar
u$, if there is $p\in\pos(t)$, $l\a r\in\cR$ and $\s$ such that
$t|_p=l\s$ and $u=t[r\s]_p$.

\smallskip

{\bf Constructor systems.} A function symbol $\ff$ is either a {\em
  constructor symbol} if no rule left-hand side is headed by $\ff$, or
a {\em defined symbol} otherwise. A {\em pattern} is a variable or a
term of the form $\fc\vt$ with $\fc$ a constructor symbol and $\vt$
patterns. A rewrite system is {\em constructor} if every rule is of
the form $\ff\vl\a r$ with $\vl$ patterns.

As usual, we assume that constructors form a valid inductive structure
{}\cite{blanqui05mscs}, that is, there is a well-founded
quasi-ordering $\geb$ on $\cB$ such that, for each base type $\fB$,
constructor $\fc:\vT\A\fB$ and base type $\fC$ occuring at position
$p$ in $T_i$, either $\fC\ltb\fB$ or $\fC\eqb\fB$ and
$p\in\pos^+(T_i)$. Mendler indeed showed that invalid inductive
structures lead to non-termination {}\cite{mendler87phd}.

Given a constructor $\fc:\vT\A\fB$, let $\ind(c)$ be the set of
integers $i$ such that $T_i$ contains a base type $\fC\eqb\fB$. A
constructor $\fc$ with $\ind(\fc)\neq\vide$ is said {\em recursive}.

A constructor $\fc:\vT\A\fB$ is {\em strictly-positive} if, for each
$i$, either no base type equivalent to $\fB$ occurs in $T_i$, or $T_i$
is of the form $\vU\A\fC$ with $\fC\eqb\fB$ and no base type
equivalent to $\fB$ occuring in $\vU$.

SBT applies to constructor systems only. By using semantic labelling,
we will prove that it can also be applied to some non-constructor
systems.

\section{Sized-types based termination}
\label{sec-sbt}

We now present a simplified version of the termination criterion
introduced in {}\cite{blanqui04rta}, where the first author considers
rewrite systems on terms of the Calculus of Algebraic Constructions, a
complex type system with polymorphic and dependent types. Here, we
restrict our attention to simply-typed $\l$-terms since there is no
extension of semantic labelling to polymorphic and dependent types
yet.

This termination criterion is based on the semantics of types in
reducibility candidates {}\cite{girard72phd}. An arrow type $T\A U$ is
interpreted by the set $\I{T\A U}=\{v\in\cT\mid\all
t\in\I{T},vt\in\I{U}\}$. A base type $\fB$ is interpreted by the
fixpoint $\I\fB$ of the monotonic function $F_\fB(X)=\{v\in\SN\mid\all$
constructor $\fc:\vT\A\fB,\all\vt,\all i\in\ind(\fc),v\a^*\fc\vt\A
t_i\in\I{T_i}_{\fB\to X}\}$ on the lattice of reducibility candidates
that is complete for set inclusion {}\cite{blanqui05mscs}. This
fixpoint, defined by induction on the well-founded quasi-ordering
$\geb$ on base types, can be reached by transfinite iteration of
$F_\fB$ up to some limit ordinal $\w_\fB$ strictly smaller than the
first uncountable ordinal $\kA$. This provides us with the following
notion of size: the size of a term $t\in\I\fB$ is the smallest ordinal
$o_\fB(t)=\ka<\kA$ such that $t\in F_\fB^\ka(\bot)$, where $\bot$ is
the smallest element of the lattice and $F_\fB^\ka$ is the function
obtained after $\ka$ transfinite iterations of $F_\fB$.

\comment{
\begin{itemize}
\item $F_\fB^0(x)=x$,
\item $F_\fB^{\ka+1}(x)= F_\fB(F_\fB^\ka(x))$,
\item $F_\fB^\l(x) = \bigcup\{F_\fB^\ka(x)\mid\ka<\l\}$
if $\l$ is a limit ordinal.
\end{itemize}
}

This notion of size, which corresponds to the tree height for
patterns, has the following properties: it is well-founded; the size
of a pattern is strictly bigger than the size of its subterms; if $t\a
t'$ then the size of $t'$ is smaller than (since $\a$ may be non
confluent) or equal to the size of $t$.

SBT consists then in providing a way to syntactically represent the
sizes of terms and, given for each function symbol an annotation
describing how the size of its output is related to the sizes of its
inputs, check that some measure on the sizes of its arguments
decreases in each recursive call.


\smallskip

{\bf Size algebra.} Sizes are represented and compared by using a
first-order term algebra $\cA=\cT(\S,\cX)$ equipped with an ordering
$\les$ such that:

\begin{itemize}
\item
$\lts$ is stable by substitution;
\item
$(\kA,<_\kA)$, where $<_\kA$ is the usual ordering on ordinals, is a
model of $(\cA,\lts)$:
\begin{itemize}
\item
every symbol $\fh\in\S_n$ is interpreted by a function
$\fh^\kA:\kA^n\a\kA$;
\item
if $a\lts b$ then $\I{a}\mu<_\kA\I{b}\mu$ for each $\mu:\cX\a\kA$.
\end{itemize}
\end{itemize}

To denote a size that cannot be expressed in $\cA$ (or a size that we
do not care about), $\S$ is extended with a (biggest) nullary element
$\infty$. Let $\o\cA$ be the extended term algebra in which all terms
containing $\infty$ are identified,
${<_\o\cA}={{<_\cA}\cup{\{(a,\infty)\mid a\in\cA\}}}$ and
${\le_\o\cA}={{\le_\cA}\cup{\{(a,\infty)\mid a\in\o\cA\}}}$. Note that
such an extension is often used in domain theory but with a least
element instead.


\smallskip

{\bf Annotated types.} The set of base types is now all the
expressions $\fB^a$ such that $\fB\in\cB$ and $a\in\o\cA$. The
interpretation of $\fB^\infty$ (also written $\fB$) is $\I\fB$ and,
given $a\in\cA$, the interpretation of $\fB^a$ wrt a size valuation
$\mu:\cX\a\kA$ is the set of terms in $\I\fB$ whose size is smaller or
equal to $\I{a}\mu$: $\I{\fB^a}^\mu=F_\fB^{\I{a}\mu}(\bot)$.

Hence, we assume that every symbol $\ff\in\cF$ is given an annotated
type $\tf^\cA$ whose size variables, like type variables in ML, are
implicitly universally quantified and can be instantiated by any size
expression. Hence the typing rule for symbols in Figure {}\ref{fig-th}
allows any size substitution $\vphi$ to be applied to
$\tf^\cA$. Subtyping naturally follows from the interpretation of
types and the ordering on $\cA$.

\begin{figure}
\vsp[-5mm]
\centering
\caption{Type system with size annotations\label{fig-th}}
\vsp[3mm]
$\cfrac{\vphi:\cX\a\cA}{\G\th^s\ff:\tf^\cA\vphi}$\quad\quad
$\cfrac{(x,T)\in\G}{\G\th^sx:T}$\quad\quad
$\cfrac{\G,x:T\th^su:U\quad x\notin\G}{\G\th^s\l x^Tu:T\A U}$\\[3mm]

$\cfrac{\G\th^st:U\A V\quad \G\th^su:U}{\G\th^stu:V}$\quad\quad
$\cfrac{\G\th^st:T\quad T\le T'}{\G\th^st:T'}$\\[3mm]

$\cfrac{a\le_\o\cA b}{\fB^a\le\fB^b}$\quad\quad
$\cfrac{T'\le T\quad U\le U'}{T\A U\le T'\A U'}$\quad\quad
$\cfrac{T\le U\quad U\le V}{T\le V}$
\end{figure}

\begin{definition}
Given a type $T$, let $T^\infty$ be the type obtained by annotating
every base type with $\infty$, and $annot_\fB^\alpha(T)$ be the type
obtained by annotating every base type $\fC\eqb\fB$ with $\alpha$, and
every base type $\fC\not\eqb\fB$ with $\infty$. Conversely, given an
annotated type $T$, let $|T|$ be the type obtained by removing all
annotations.
\end{definition}

Note that, in constrast to types, terms are unchanged: in $\l x^Tu$,
$T=T^\infty$.

Given a size symbol $\fh\in\S$, let $\mon^+(\fh)$
(resp. $\mon^-(\fh)$) be the sets of integers $i$ such that $\fh$ is
monotonic (resp. anti-monotonic) in its $i$-th argument. The sets of
{\em positive} and {\em negative} positions in an annotated type are:
\comment{$\fB^a$ and a size expression $\fh(\vec\va)$ are then defined as
follows:}

\begin{itemize}
\item $\pos^-(\fB^a)=0\cdot\pos^-(a)$
and $\pos^+(\fB^a)=\{\vep\}\cup 0\cdot\pos^+(a)$,
\item $\pos^-(\alpha)\!=\!\vide,\,\pos^+(\alpha)\!=\!\vep,\,
\pos^\d(\fh(\va))
\!=\!\bigcup\{i\!\cdot\!\pos^{\ep\d}(a_i)\,|\,i\!\in\!\mon^\ep(h),\ep\!\in^{\{-,+\}}\}$.
\end{itemize}

To ease the expression of termination conditions, for every defined
symbol $\ff$, $\tf^\cA$ is assumed to be of the form
$\vP\A\vec\fB^{\valpha_\ff}\A\fB^{\ff^\cA(\valpha_\ff)}$ with
$|\tf^\cA|=\tf$, $\cX(\vP)=\vide$ and
$\cX(\ff^\cA(\valpha_\ff))\sle\{\valpha_\ff\}$ where $\valpha_\ff$ are
pairwise distinct variables. The arguments of type $\vec\fB$ are the
ones whose size will be taken into account for proving
termination. The arguments of type $\vP$ are parameters and every rule
defining $\ff$ must be of the form $\ff\vp\vl\a r$ with $\vp\in\cX$,
$|\vp|=|\vP|$ and $|\vl|=|\vec\fB|$.

Moreover, the annotated type of a constructor $\fc:T_1\ldots T_n\A\fB$
is:
\[\tc^\cA=annot_\fB^\alpha(T_1)\A\ldots\A
annot_\fB^\alpha(T_n)\A\fB^{\fc^\cA(\alpha)}\]
\noindent
with $\fc^\cA(\alpha)=\infty$ if $\fc$ is non-recursive, and
$\fc^\cA(\alpha)=\fs(\alpha)$ otherwise, where $\fs$ is a monotonic
unary symbol interpreted as the ordinal successor and such that
$a\lts\fs(a)$ for each $a$.


\smallskip

{\bf Termination criterion.} We assume given a well-founded
quasi-ordering $\ge_\cF$ on $\cF$ and, for each function symbol
$\ff:^s\vT\A\vec\fB^{\valpha_\ff}\A\fB^{\ff^\cA(\valpha_\ff)}$ and set
$X\in\{\cA,\kA\}$, an ordered domain $(D_\ff^X,<_\ff^X)$ and a
function $\z_\ff^X:X^{|\valpha_\ff|}\a D_\ff^X$ compatible with $\eqf$
(\ie $|\valpha_\ff|=|\valpha_\fg|$, $D_\ff^X=D_\fg^X$,
${<_\ff^X}={<_\fg^X}$ and $\z_\ff^X=\z_\fg^X$ whenever $\ff\eqf\fg$)
and such that $>_\ff^\kA$ is well-founded and
$\z_\ff^\kA(\I\va\mu)<_\ff^\kA\z_\ff^\kA(\I\vb\mu)$ whenever
$\z_\ff^\cA(\va)<_\ff^\cA\z_\ff^\cA(\vb)$ and $\mu:\cX\a\kA$.

Usual domains are $\kA^n$ ordered lexicographically, or the multisets
on $\kA$ ordered with the multiset extension of $>_\kA$.

\begin{theorem}[\cite{blanqui04rta}]
\label{thm-sbt}
Let $\cR$ be a {\em constructor} system. The relation $\ab\cup\ar$
terminates if, for each defined
$\ff:^s\vP\A\vec\fB^\valpha\A\fB^{\ff^\cA(\valpha)}$ and rule
$\ff\vp\vl\a r\in\cR$, there is an environment $\G$ and a size
substitution $(_\valpha^\va)$ such that:
\begin{itemize}
\item
pattern condition: for each $\t$, if $\vp\t\in\I\vP$ and
$\vl\t\in\I{\vec\fB}$ then there is $\nu$ such that, for each
$(x,T)\in\G$, $x\t\in\I{T}^\nu$ and $\I\va\nu\le o_{\vec\fB}(\vl\t)$;
\item
argument decreasingness: $\G\thc^sr:\fB^{\ff^\cA(\va)}$ where $\thc$
is defined in Figure {}\ref{fig-thc};
\item
size annotations monotonicity:
$\pos(\valpha,\ff^\cA(\valpha))\sle\pos^+(\ff^\cA(\valpha))$.
\end{itemize}
\end{theorem}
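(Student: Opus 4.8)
The plan is to prove termination through Girard's reducibility candidates, using exactly the interpretation $\I{\cdot}$ already fixed in Section~\ref{sec-sbt}: I would show that every typable term is \emph{computable}, i.e.\ belongs to the interpretation of its type, and conclude by the fact that each candidate is a set of strongly normalizing terms, so that membership entails strong normalization of $\ab\cup\ar$. The backbone is an \emph{adequacy} lemma: if $\G\th^s t:T$ (Figure~\ref{fig-th}) and $\t$ is a substitution with $x\t\in\I{T_x}^\mu$ for each $(x,T_x)\in\G$ and each size valuation $\mu:\cX\a\kA$, then $t\t\in\I{T}^\mu$. This is proved by induction on the typing derivation; the variable, abstraction and application cases are the standard candidate manipulations, and the subtyping case rests on the monotonicity of $F_\fB$, which gives $\I{\fB^a}^\mu\sle\I{\fB^b}^\mu$ whenever $a\le_\o\cA b$, together with the usual variance of $\A$. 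The single case with real content is the symbol rule, which reduces everything to proving that each $\ff\in\cF$ is itself computable, i.e.\ $\ff\in\I{\tf^\cA\vphi}^\mu$ for every size substitution $\vphi$ and every $\mu$.

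The heart of the argument is therefore the computability of the defined symbols, which I would establish by well-founded induction on the pair formed by the $\ge_\cF$-class of $\ff$ and the ordinal $\z_\ff^\kA(\vec\kappa)$, ordered lexicographically; the second component is well-founded because $>_\ff^\kA$ is. Fixing computable arguments $\vu\in\I\vP^\mu$ and $\vv\in\I{\vec\fB}^\mu$ of sizes $\vec\kappa=o_{\vec\fB}(\vv)$, the goal is $\ff\,\vu\,\vv\in F_\fB^{\ff^\kA(\vec\kappa)}(\bot)$, where $\ff^\kA$ is the interpretation of $\ff^\cA$ in the model $(\kA,<_\kA)$. Since $\ff\,\vu\,\vv$ is neutral, the candidate closure property lets me argue by examining its one-step reducts: steps internal to $\vu,\vv$ preserve computability because candidates are closed under reduction and size does not increase along $\a$, so the critical reducts are the head steps $\ff\,\vu\,\vv=(\ff\vp\vl)\t\ar r\t$ produced by a rule $\ff\vp\vl\a r\in\cR$, with $\vp\t=\vu$ and $\vl\t=\vv$.

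For such a head step I would feed the three hypotheses of the theorem into adequacy. The pattern condition supplies a size valuation $\nu$ that both realizes the environment on $\t$, so that $x\t\in\I{T}^\nu$ for each $(x,T)\in\G$, and bounds the declared sizes by the true ones, $\I\va\nu\le\vec\kappa$. Argument decreasingness $\G\thc^s r:\fB^{\ff^\cA(\va)}$ is then passed to the variant of adequacy for the judgement $\thc$ of Figure~\ref{fig-thc}: this judgement differs from $\th^s$ only in that a recursive symbol $\fg\eqf\ff$ may be typed on size arguments $\vb$ only when $\z_\fg^\cA(\vb)<_\ff^\cA\z_\ff^\cA(\va)$, which by the compatibility assumption on $\z$ gives $\z_\fg^\kA(\I\vb\nu)<_\ff^\kA\z_\ff^\kA(\I\va\nu)$ and hence makes the induction hypothesis available for every recursive call occurring in $r$. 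Adequacy then yields $r\t\in\I{\fB^{\ff^\cA(\va)}}^\nu$, and the size annotations monotonicity condition $\pos(\valpha,\ff^\cA(\valpha))\sle\pos^+(\ff^\cA(\valpha))$, which says that $\ff^\kA$ is monotone in its arguments, lifts $\I\va\nu\le\vec\kappa$ to the output bound $\ff^\kA(\I\va\nu)\le\ff^\kA(\vec\kappa)$, so that $r\t\in F_\fB^{\ff^\kA(\vec\kappa)}(\bot)$ as required.

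The step I expect to be the main obstacle is exactly the interface between the \emph{syntactic} size algebra $\cA$ and the \emph{ordinal} sizes in $\kA$. One must check that the valuation $\nu$ delivered by the pattern condition simultaneously validates $\G$ and keeps $\I\va\nu$ below the true argument sizes, and, more delicately, that the strict decrease forced syntactically by $\thc$ transports, through the model and the compatibility of $\z$, into a genuine drop of the ordinal measure for the arguments actually supplied to the recursive call, whose sizes are only bounded by $\I\vb\nu$ rather than equal to it. Reconciling these inequalities requires leaning on the monotonicity of the stage $F_\fB^{(\cdot)}$ and on the size annotations monotonicity hypothesis, and it is the careful bookkeeping of declared versus true sizes, rather than any single candidate computation, that forms the technical core. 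An alternative route, the one this paper develops, is to bypass the direct candidate argument by transporting the system along a size model through semantic labelling and invoking an existing strong normalization result.
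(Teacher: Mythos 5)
Your proposal cannot coincide with ``the paper's proof'' for a simple reason: the paper never proves Theorem~\ref{thm-sbt}. The theorem is imported from \cite{blanqui04rta}, and the text right after it even warns that the criterion is not stated there exactly in this form. What you have written is, in outline, a reconstruction of that \emph{original} proof: reducibility candidates, an adequacy lemma for $\th^s$ and its $\thc$ variant, and computability of defined symbols by well-founded induction on the precedence combined with the measure $\z_\ff^\kA$. The route this paper actually develops is deliberately different: it exhibits the size ordinals as a (quasi-)model, labels the system with them, and invokes existing termination results for the labelled system (precedence termination, resp.\ the General Schema) --- this is Theorem~\ref{thm-sbt-plus} together with Theorems~\ref{thm-fo} and \ref{thm-ho}. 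The trade-off is genuine: your direct candidate argument does not need the extra hypotheses of Theorem~\ref{thm-fo} (finite branching, finitely many constructors per base type, monotonicity assumptions), whereas the labelling route is modular and is precisely what allows the authors to extend SBT to systems with matching on defined symbols.

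On correctness, there is one step where your sketch is too optimistic, and it is exactly the step you flag as the main obstacle. You claim the declared-versus-true-size bookkeeping closes using monotonicity of the stages $F_\fB^{(\cdot)}$ and the size annotations monotonicity hypothesis. Those two only handle the \emph{output} side: they let you weaken $F_\fB^{\I{\ff^\cA(\va)}\nu}(\bot)$ into the candidate demanded by the caller's instantiation. They do not close the \emph{recursive-call} comparison: from $\z_\ff^\kA(\I\vb\nu)<_\ff^\kA\z_\ff^\kA(\I\va\nu)$ (model condition), $o(\vm\t)\le\I\vb\nu$ (adequacy), and $\I\va\nu\le o_{\vec\fB}(\vl\t)$ (pattern condition), you cannot conclude that your induction measure genuinely drops, i.e.\ that $\z_\ff^\kA$ evaluated at the sizes governing the inner call is $<_\ff^\kA$ its value at the sizes governing the outer call, unless $\z_\ff^\kA$ is monotonic with respect to the pointwise order on $\kA^{|\valpha_\ff|}$ --- and that is \emph{not} among the hypotheses of Theorem~\ref{thm-sbt}; the stated compatibility condition only relates $\z_\ff^\kA$ at pairs of points that are images of syntactically $\z$-comparable size expressions. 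Reformulating the induction over declared bounds instead of true sizes merely moves the same missing inequality into the head-step case, and the same problem already appears when handling reducts internal to the arguments. The gap is benign in practice: the ``usual domains'' mentioned after the theorem (lexicographic and multiset extensions of $>_\kA$) are monotonic, which is why the proof in \cite{blanqui04rta}, where statuses play the role of $\z$, goes through; and it is exactly the assumption ``$\z_\ff^\cA$ monotonic'' that this paper adds when it re-proves the result as Theorem~\ref{thm-fo}. So either add monotonicity of $\z$ to your hypotheses or restrict to lex/mul measures; as sketched, your induction does not close for an arbitrary compatible $\z$. (You will also need an inner induction on strong normalization of the arguments to handle internal reducts of the neutral term $\ff\vu\vv$, a routine point your sketch elides.)
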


The termination criterion introduced in {}\cite{blanqui04rta} is not
expressed exactly like this. The pattern condition is replaced by
syntactic conditions implying the pattern condition, but the
termination proof is explicitly based on the pattern condition. This
condition means that $\va$ is a valid representation of the size of
$\vl$, whatever the instantiation of the variables of $\vl$ is, and
thus that any recursive call with arguments of size smaller than $\va$
is admissible. The existence of such a valid syntactic representation
depends on $\vl$ and the size annotations of constructors. With the
chosen annotations, the condition is not satisfied by some patterns
(whose type admits elements of size bigger than $\w$, Appendix
\version{A}{\ref{sec-pat-cond}}). This suggests to use a more precise
annotation for constructors.

\begin{figure}
\vsp[-5mm]
\centering
\caption{Computability closure\label{fig-thc}}
\vsp[3mm]
$\cfrac{\fg\ltf\ff,~\psi:\cX\a\cA}{\G\thc^s\fg:\tg^\cA\psi}$
+ variable, abstraction, application and subtyping rules of Fig.
{}\ref{fig-th}\\[3mm]

$\cfrac{\fg\eqf\ff\quad
\fg:^s\vU\A\vec\fC^\vbeta\A\fC^{\fg^\cA(\vbeta)}\quad
\G\thc^s\vu:\vU\quad
\G\thc^s\vm:\vec\fB^\vb\quad
\z_\ff^\cA(\vb)<_\ff^\cA\z_\ff^\cA(\va)}
{\G\thc^s\fg\vu\vm:\fC^{\fg^\cA(\vb)}}$
\vsp[-3mm]
\end{figure}

The expressive power of the criterion depends on $\cA$. Taking the
size algebra $\cA$ reduced to the successor symbol $\fs$ (the
decidability of which is proved in {}\cite{blanqui05csl}) is
sufficient to handle every primitive recursive function. As an
example, consider the recursor $\rec_T:\fO\A T\A(\fO\A
T)\A((\fN\A\fO)\A(\fN\A T)\A T)\A T$ on the type $\fO$ of Brouwer's
ordinals whose constructors are $\mf{0}:\fO$,
$\fs:\fO^\alpha\A\fO^{\fs\alpha}$ and
$\lim:(\fN\A\fO^\alpha)\A\fO^{\fs\alpha}$, where $\fN$ is the type of
natural numbers whose constructors are $\mf{0}:\fN$ and
$\fs:\fN^\alpha\A\fN^{\fs\alpha}$:

\begin{rewc}
\rec\mf{0}uvw & u\\
\rec(\fs x)uvw & vx(\rec xuvw)\\
\rec(\lim f)uvw & wf(\l n\rec(fn)uvw)\\
\end{rewc}

For instance, with $f:\fN\A\fO^\alpha$, we have $\lim
f:\fO^{\fs\alpha}$, $fn:\fO^\alpha$ and $\fs\alpha>_\cA\alpha$.

An example of non-simply terminating system satisfying the criterion
is the following system defining a division function
$/:\fN^\alpha\A\fN\A\fN^\alpha$ by using a subtraction function
$-:\fN^\alpha\A\fN\A\fN^\alpha$.

\begin{center}
\begin{tabular}{c@{\hsp[1cm]}c}
\begin{rew}
-x\mf{0} & x\\
-\mf{0}x & \mf{0}\\
-(\fs x)(\fs y) & -xy\\
\end{rew}
&\begin{rew}
/\mf{0}x & \mf{0}\\
/(\fs x)y & \fs(/(-xy)y)\\
\end{rew}
\end{tabular}
\end{center}

Indeed, with $x:\fN^x$, we have $\fs x:\fN^{\fs x}$, $-xy:\fN^x$ and
$\fs x>_\cA x$.

\section{Annotating constructor types with a $\max$ symbol}
\label{sec-sup}

In this section, we simplify the previous termination criterion by
annotating constructor types in an algebra made of the following
symbols:

\begin{itemize}
\item
$\mf{0}\in\S_0$ interpreted as the ordinal $0$;
\item
$\fs\in\S_1$ interpreted as the successor ordinal;
\item
$\max\in\S_2$ interpreted as the max on ordinals.
\end{itemize}

For the annotated type of a constructor $\fc:T_1\ldots T_n\A\fB$, we
now take:
\[\tc^\cA=annot_\fB^{\alpha_1}(T_1)\A\ldots\A
annot_\fB^{\alpha_n}(T_n)\A\fB^{\fc_\cA(\alpha_1,\ldots,\alpha_n)}\]
with $\valpha$ distinct variables, $\fc_\cA(\valpha)=\mf{0}$ if $\fc$
is non-recursive, and $\fc_\cA(\valpha)=\fs(\max(\alpha_i\mid
i\in\ind(\fc)))$ otherwise, where $\max(\alpha_1,\ldots,\alpha_{k+1})
=\max(\alpha_1,\max(\alpha_2,\ldots,\alpha_{k+1}))$ and
$\max(\alpha_1)=\alpha_1$.

This does not affect the correctness of Theorem {}\ref{thm-sbt} since,
in this case too, one can prove that constructors are computable:
$\fc\in\I{\tc^\cA}^\mu$ for each $\mu$.

Moreover, now, both constructors and defined symbols have a type of
the form $annot_{\fB_1}^{\alpha_1}(T_1)\A\ldots\A
annot_{\fB_n}^{\alpha_n}(T_n)\A\fB^{\ff^\cA(\valpha)}$ with $\valpha$
distinct variables.

This means that a constructor can be applied to any sequence of
arguments without having to use subtyping. Indeed, previously, not all
constructor applications were possible (take $\fc xy$ with
$\fc:\fB^\alpha\A\fB^\alpha\A\fb^{\fs\alpha}$, $x:\fB^x$ and
$y:\fB^y$) and some constructor applications required subtyping (take
$\fc x(\fd x)$ with $\fc:\fB^\alpha\A\fB^\alpha\A\fb^{\fs\alpha}$,
$\fd:\fB^\alpha\A\fB^{\fs\alpha}$ and $x:\fB^x$).

We can therefore postpone subtyping after typing without losing much
expressive power . It follows that every term has a most general type
given by a simplified version of the type inference system $\th^i$ of
{}\cite{blanqui05csl} using unification only (see Appendix
\version{B}{\ref{sec-typ-inf}}).

Moreover, the pattern and monotonicity conditions can always be
satisfied by defining, for each symbol $\ff:^s\vP\A\vec\fB^\valpha\A
U$ and rule $\ff\vp\vl\a r\in\cR$, $\va$ as $\s(\vl)$ where $\s(x)=x$
and $\s(\fc\vt)=\fc^\cA(\s(\vt))$, and $\G$ as the set of pairs
$(x,T)$ such that $x\in\cX(\ff\vp\vl)$ and $T$ is:

\begin{itemize}
\item
$P_i$ if $x=p_i$,
\item
$\fB_i^x$ if $x=l_i$,
\item
$annot_{\fB_i}^x(T)$ if $\fc\vu x\vv$ is a subterm of $l_i$ and
$\fc:\vU\A T\A\vV\A\fC$.
\end{itemize}

Note that, if $\G\th t:T$ and $t$ is a non-variable pattern then there
is a base type $\fB$ such that $\G\th^it:\fB^{\s(t)}$. So, $\s(t)$ is
the most general size of $t$.

\comment{
\begin{lemma}
\label{lem-pat-cond}
For each $\t$, if $\vp\t\in\I\vP$ and $\vl\t\in\I{\vec\fB}$ then there
is $\nu$ such that, for each $(x,T)\in\G$, $x\t\in\I{T}^\nu$ and
$\va\nu=o_{\vec\fB}(\vl\t)$.
\end{lemma}

Hence, by annotating constructor types with $\max$, we get a simpler
termination criterion: }

\begin{theorem}
\label{thm-sbt-plus}
Let $\cR$ be a constructor system. The relation $\ab\cup\ar$
terminates if, for each
$\ff:^s\vP\A\vec\fB^\valpha\A\fB^{\ff^\cA(\valpha)}$ and rule
$\ff\vp\vl\a r\in\cR$, we have:
\begin{itemize}
\item
argument decreasingness: $\G\thc^i r:\fB^a$ and
$a\le_\o\cA\ff^\cA(\va)$ where $\G$ and $\va=\s(\vl)$ are defined just
before and $\thc^i$ is the type inference system $\th^i$
{}\cite{blanqui05csl} (see Appendix \version{B}{\ref{sec-typ-inf}})
with function applications restricted as in Figure {}\ref{fig-thc}.
\end{itemize}
\end{theorem}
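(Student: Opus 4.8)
The plan is to derive this criterion from the original one, Theorem~\ref{thm-sbt}, by checking that its three hypotheses all follow from the single condition stated here, once constructor types are annotated in the $\{\mf{0},\fs,\max\}$ algebra and $\va$, $\G$ are fixed to the canonical choices described just before the statement. Concretely, the work splits into a preliminary semantic step --- re-establishing that constructors are computable under the new annotations, which is what makes Theorem~\ref{thm-sbt} applicable at all --- and then the verification of the pattern condition, the monotonicity condition, and the argument-decreasingness condition.

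The semantic heart of the proof is one fact about the size function: for a constructor term $\fc\vt\in\I\fB$, the value $o_\fB(\fc\vt)$ equals $\I{\fc_\cA(\valpha)}\nu$ where $\nu$ sends each $\alpha_i$ to the size of the $i$-th argument, i.e.\ the tree-height reading already announced in Section~\ref{sec-sbt}: $\mf{0}$ when $\fc$ is non-recursive and $\fs$ of the maximum of the recursive arguments' sizes otherwise. First I would establish this by unfolding $o_\fB$ through the iteration of $F_\fB$, so that $\max$ captures exactly the least stage bounding all recursive arguments and $\fs$ the one further step contributed by $\fc$. From this fact two things follow simultaneously. First, with the new $\max$-annotated type one has $\fc\in\I{\tc^\cA}^\mu$ for every $\mu$, since feeding arguments of sizes $\mu(\alpha_i)$ yields a term of size $\I{\fc_\cA(\valpha)}\mu$; this legitimises the use of Theorem~\ref{thm-sbt}. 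Second, the pattern condition holds for $\va=\s(\vl)$: letting $\nu$ map every size variable to the actual size of the subterm of $\vl\t$ it annotates, an induction on the pattern structure using the displayed fact gives $\I{\s(l_i)}\nu=o_{\fB_i}(l_i\t)$ and $x\t\in\I T^\nu$ for every $(x,T)\in\G$; in particular $\I\va\nu=o_{\vec\fB}(\vl\t)$, which is even stronger than the inequality Theorem~\ref{thm-sbt} requires.

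The other two conditions are comparatively routine. Since the size algebra of Section~\ref{sec-sup} contains only the monotonic symbols $\mf{0}$, $\fs$ and $\max$, every variable occurs only at positive positions of any size expression, so the monotonicity condition $\pos(\valpha,\ff^\cA(\valpha))\sle\pos^+(\ff^\cA(\valpha))$ is automatic. For argument decreasingness, the hypothesis here delivers $\G\thc^i r:\fB^a$ with $a\le_\o\cA\ff^\cA(\va)$, whereas Theorem~\ref{thm-sbt} asks for $\G\thc^s r:\fB^{\ff^\cA(\va)}$; I would bridge the two by the soundness of the inference system $\thc^i$ with respect to the checking system $\thc^s$ (Appendix~\version{B}{\ref{sec-typ-inf}}), which turns the $\thc^i$-derivation of the most general type $\fB^a$ into a $\thc^s$-derivation, followed by a single subtyping step $\fB^a\le\fB^{\ff^\cA(\va)}$ justified by $a\le_\o\cA\ff^\cA(\va)$.

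The hard part will be the size-computation fact of the second paragraph and, above all, its use to obtain the pattern condition with equality. With the earlier annotation $\fc^\cA(\alpha)=\fs(\alpha)$ built from a single size variable, the pattern condition could fail (Appendix~\version{A}{\ref{sec-pat-cond}}), precisely because one successor of one variable cannot represent the height of a branching pattern; switching to $\fs(\max(\alpha_i\mid i\in\ind(\fc)))$ with distinct variables is exactly what makes the tree height syntactically expressible. Turning this intuition into a proof requires careful bookkeeping for recursive constructors with several recursive arguments and for nested, strictly-positive occurrences, checking that the valuation $\nu$ read off from the match is well defined and validates every entry of $\G$ at once, and that the base case of the iteration matches the $\mf{0}$ annotation exactly; the remaining steps are either semantic unfolding or appeals to the inference/checking correspondence of Appendix~\version{B}{\ref{sec-typ-inf}}.
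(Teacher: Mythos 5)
Your proposal follows essentially the same route as the paper: it reduces the statement to Theorem~\ref{thm-sbt}, and the core of the argument is exactly the paper's Appendix~C proof of the pattern condition --- defining the valuation $\nu$ from the actual sizes of the matched subterms and showing $\s(l_i)\nu=o_{\fB_i}(l_i\t)$ by induction on the pattern, using the fact that the size of a recursive constructor term is the successor of the sup of the sizes of its recursive arguments (with the $o_{\l\alpha T}$ generalisation for higher-order arguments that you flag as the needed bookkeeping). The remaining points you make --- constructor computability under the $\max$-annotation, automatic monotonicity, and bridging $\thc^i$ plus $a\le_{\o\cA}\ff^\cA(\va)$ to $\thc^s$ via Appendix~B and one subtyping step --- are exactly what the paper treats as the implicit or routine part of the reduction.
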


\noindent
The proof is given in Appendix \version{C}{\ref{sec-sbt-plus}}. In the
following, we say that $\cR$ {\em SB-terminates} if $\cR$ satisfies
the conditions of Theorem {}\ref{thm-sbt-plus}.

\section{First-order semantic labelling}
\label{sec-fosl}

Semantic labelling is a transformation technique introduced by Hans
Zantema for proving the termination of first-order rewrite systems
{}\cite{zantema95fi}. It consists in labelling function symbols by
using some model of the rewrite system.

Let $\cF$ be a first-order signature and $\cM$ be an $\cF$-algebra
equipped with a partial order $\le_\cM$. For each $\ff\in\cF_n$, we
assume given a non-empty poset $(S^\ff,\le_\ff)$ and a {\em labelling
function} $\pi_\ff:M^n\a S^\ff$. Then, let $\o\cF$ be the signature
such that $\o\cF_n=\{\ff_a\mid \ff\in\cF_n,a\in S^\ff\}$.

The {\em labelling} of a term wrt a valuation $\mu:\cX\a M$ is defined
as follows: $lab^\mu(x)=x$ and $lab^\mu(\ff(t_1,\ldots,t_n))=
\ff_{\pi_\ff(\I{t_1}\mu,\ldots,\I{t_n}\mu)}(lab^\mu(t_1),\ldots,lab^\mu(t_n))$.

The fundamental theorem of semantic labelling is then:

\begin{theorem}[\cite{zantema95fi}]
\label{thm-fosl}
Given a rewrite system $\cR$, an ordered $\cF$-algebra $(\cM,\le_\cM)$
and a labelling system $(S^\ff,\le_\ff,\pi_\ff)_{\ff\in\cF}$, the
relation $\ar$ terminates if:

\begin{enumerate}
\item
$\cM$ is a quasi-model of $\cR$, that is:
\begin{itemize}
\item
for each rule $l\a r\in\cR$ and valuation $\mu:\cX\a M$,
$\I{l}\mu\ge_\cM\I{r}\mu$,
\item
for each $\ff\in\cF$, $\ff^\cM$ is monotonic;
\end{itemize}
\item
for each $\ff\in\cF$, $\pi_\ff$ is monotonic;
\item
the relation $\a_{lab(\cR)\cup Decr}$ terminates where:\\
$lab(\cR)=\{lab^\mu(l)\a lab^\mu(r)\mid l\a r\in\cR,\mu:\cX\a M\}$,\\
$Decr=\{\ff_a(x_1,\ldots,x_n)\a\ff_b(x_1,\ldots,x_n)\mid
\ff\in\cF,a>_\ff b\}$.
\end{enumerate}
\end{theorem}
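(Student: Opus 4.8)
The plan is to prove this classical theorem of Zantema by reducing termination of $\ar$ to termination of the labelled system $\a_{lab(\cR)\cup Decr}$, which is assumed to terminate in hypothesis (3). The strategy is the standard one: show that every infinite $\ar$-rewrite sequence $t_0\ar t_1\ar\cdots$ lifts, under labelling with respect to a fixed valuation, to an infinite $\a_{lab(\cR)\cup Decr}$-sequence, contradicting (3). Since $\a_{lab(\cR)\cup Decr}$ terminates, so must $\ar$.

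First I would establish the key lifting lemma: for any valuation $\mu:\cX\a M$ and any single rewrite step $t\ar u$, we have $lab^\mu(t)\,\a_{lab(\cR)\cup Decr}^+\,lab^\mu(u)$ (at least one step). This is proved by induction on the position $p$ at which the redex $t|_p=l\s$ is contracted. The base case ($p=\vep$) uses the quasi-model hypothesis: the rule $l\a r$ labelled at the appropriate valuation $\nu$ (namely $\nu(x)=\I{x\s}\mu$) gives a rule $lab^\nu(l)\a lab^\nu(r)$ in $lab(\cR)$, and one checks that $lab^\mu(l\s)=lab^\nu(l)\s'$ and $lab^\mu(r\s)=lab^\nu(r)\s'$ for the labelled substitution $\s'=lab^\mu\circ\s$; this requires the standard substitution-labelling commutation $lab^\mu(t\s)=lab^\nu(lab^\mu\!\circ\s)(t)$, verified by induction on $t$. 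The inductive case, where the redex is strictly below the head symbol $\ff$, is where the interesting subtlety lies: rewriting an argument $t_i$ to $u_i$ changes its interpretation, so by the quasi-model inequality $\I{t_i}\mu\ge_\cM\I{u_i}\mu$, and by monotonicity of $\ff^\cM$ (hypothesis 1) and of $\pi_\ff$ (hypothesis 2), the label $\pi_\ff(\ldots)$ can only decrease, i.e. $\pi_\ff(\ldots\I{t_i}\mu\ldots)\ge_\ff\pi_\ff(\ldots\I{u_i}\mu\ldots)$. Hence after performing the labelled rewrite inside the argument, the outer symbol may carry a too-large label, which is corrected by finitely many $Decr$ steps $\ff_a\a\ff_b$; this is exactly why $Decr$ is included in the labelled system.

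With the lifting lemma in hand, the main argument is immediate. Fix any valuation $\mu:\cX\a M$ (the poset need not have a least element, but $M$ is nonempty so such a $\mu$ exists). Given a hypothetical infinite sequence $t_0\ar t_1\ar t_2\ar\cdots$, applying the lemma to each step produces $lab^\mu(t_0)\,\a^+\,lab^\mu(t_1)\,\a^+\,lab^\mu(t_2)\,\a^+\cdots$ in $\a_{lab(\cR)\cup Decr}$, an infinite sequence, contradicting the termination assumed in (3). I would phrase this contrapositively to avoid invoking the existence of infinite sequences constructively: if $\ar$ does not terminate then neither does $\a_{lab(\cR)\cup Decr}$.

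The step I expect to be the main obstacle is the inductive (below-the-head) case of the lifting lemma, specifically the careful bookkeeping needed to combine the two monotonicity hypotheses. One must verify that the order of arguments, the direction of the inequalities ($\ge_\cM$ from the quasi-model, translated through $\pi_\ff$ into $\ge_\ff$), and the definition of $Decr$ all align so that the correction steps go in the permitted direction $a>_\ff b$. A secondary technical point is the substitution-labelling commutation lemma, which is routine but must be stated precisely because the labels depend on interpretations under the \emph{composite} valuation; I would isolate it as an auxiliary lemma proved by structural induction on terms before tackling the rewrite-step lemma.
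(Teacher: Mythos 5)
The paper does not actually prove this theorem: it is imported from Zantema \cite{zantema95fi} as a black box (the theorem carries the citation and is followed only by an example), so there is no internal proof to compare yours against. Your reconstruction is the standard proof from that reference and is correct in outline: the lifting lemma ($t\ar u$ implies $lab^\mu(t)\a^+_{lab(\cR)\cup Decr}lab^\mu(u)$), proved by induction on the redex position, with the substitution--labelling commutation $lab^\mu(t\s)=lab^\nu(t)\s'$ (where $\nu(x)=\I{x\s}\mu$ and $\s'(x)=lab^\mu(x\s)$) handling the root case and a $Decr$ correction step handling positions below the root, is exactly how the result is established, and the final contradiction with hypothesis (3) is immediate.

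Two points of precision. First, the inequality $\I{t_i}\mu\ge_\cM\I{u_i}\mu$ that you invoke in the inductive case is not literally the quasi-model hypothesis, which speaks only of rule instances; you need the derived fact that $t\ar u$ implies $\I{t}\mu\ge_\cM\I{u}\mu$ for an arbitrary step, which requires its own induction: stability of the interpretation under substitution at the root (giving $\I{l\s}\mu=\I{l}\nu\ge_\cM\I{r}\nu=\I{r\s}\mu$) and monotonicity of the $\ff^\cM$ to push the inequality up through the surrounding context. The cleanest organization is to prove this interpretation inequality simultaneously with the lifting claim in your induction on positions; as written, your sketch cites monotonicity of $\ff^\cM$ in the right breath but attributes the inequality itself to the hypothesis rather than to this sub-lemma. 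Second, a single $Decr$ step (or none, when the two labels coincide) suffices rather than ``finitely many'', since $Decr$ directly contains $\ff_a(\vx)\a\ff_b(\vx)$ for every pair $a>_\ff b$; this is harmless but worth stating exactly, since it is also where one silently uses that $\ge_\ff$-comparability of the old and new labels comes from monotonicity of $\pi_\ff$. Neither point is a gap in the approach; both are routine to discharge.
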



For instance, by taking $M=\bN$, $\mf{0}^\cM=0$, $\fs^\cM(x)=x+1$,
$-^\cM(x,y)=x$ and $/^\cM(x,y)=x$, and by labelling $-$ and $/$ by the
semantics of their first argument, we get the following {\em infinite}
system which is easily proved terminating:

\begin{center}
\begin{tabular}{c@{\hsp[1cm]}c}
\begin{rew}
-_ix\mf{0} & x & ~(i\in\bN)\\
-_0\mf{0}x & \mf{0}\\
-_{i+1}(\fs x)(\fs y) & -_ixy & ~(i\in\bN)\\
\end{rew}
&\begin{rew}
/_0\mf{0}x & \mf{0}\\
/_{i+1}(\fs x)y & \fs(/_i(-_ixy)y) & ~(i\in\bN)\\
\end{rew}
\end{tabular}
\end{center}

\section{First-order case}
\label{sec-fo}

The reader may have already noticed some similarity between semantic
labelling and size annotations. We here render it more explicit by
giving a new proof of the correctness of SB-termination using semantic
labelling.

\comment{Let $\cF$ be a first-order signature. The set $\cT(\cF,\cX)$ of
first-order terms can be seen as a strict subset of the set of
simply-typed $\l$-terms over $\cF$ by taking $\cB=\{o\}$ and $f:o^n\A
o$ for every $f\in\cF_n$.}

In the first-order case, the interpretation of a base type does not
require transfinite iteration: all sizes are smaller than $\w$ and
$\kA=\bN$ {}\cite{blanqui05mscs}. Moreover, by taking $\G(x)=\fB^x$
for each $x$ of type $\fB$, every term $t$ has a most general size
$\s(t)$ given by its most general type: $\G\th^it:\fC^{\s(t)}$. This
function $\s$ extends to all terms the function $\s$ defined in the
previous section by taking
$\s(\ff(t_1,\ldots,t_n))=\ff^\cA(\s(t_1),\ldots,\s(t_n))$ for each
defined symbol $\ff$.


\begin{theorem}
\label{thm-fo}
SB-termination implies termination if:

\begin{itemize}
\item
$\cR$ is finitely branching and the set of constructors of each type
  $\fB$ is finite;
\item
for each defined symbol $\ff$, $\ff^\cA$ and $\z_\ff^\cA$ are monotonic.
\end{itemize}
\end{theorem}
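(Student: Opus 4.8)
The plan is to apply Zantema's semantic labelling theorem (Theorem~\ref{thm-fosl}), taking the size interpretation as model and the termination measure $\z_\ff$ as labelling. The finiteness hypotheses ensure that in the first-order case all sizes are finite, i.e.\ $\kA=\bN$, so we may let the model $\cM$ have carrier $M=\bN$ (adjoining a top element $\infty$ to interpret $\infty$-annotated outputs) and interpret each symbol by the interpretation $\ff^\kA$ of its size annotation, ignoring parameter positions. A routine induction on $t$ then shows that the value $\I{t}\mu$ computed in $\cM$ equals $\I{\s(t)}\mu$; in particular $\I{\ff\vp\vl}\mu=\I{\ff^\cA(\va)}\mu$ for $\va=\s(\vl)$, so $\cM$ assigns to each term the interpretation of its most general size.

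First I would discharge the two monotonicity requirements of Theorem~\ref{thm-fosl}. Monotonicity of each $\ff^\cM$ holds for defined symbols by the hypothesis that $\ff^\cA$ is monotonic, and for constructors because the constructor size functions are built only from $\mf{0}$, $\fs$ and $\max$, all interpreted monotonically. For the labelling I take $S^\ff=D_\ff^\kA$ ordered by $<_\ff^\kA$ and $\pi_\ff=\z_\ff^\kA$ precomposed with the projection onto the sized arguments $\vec\fB$; this is monotonic by the hypothesis on $\z_\ff^\cA$, while constructors receive the trivial one-point poset. Next comes the quasi-model inequality. For a rule $\ff\vp\vl\a r$, argument decreasingness provides a closure derivation $\G\thc^i r:\fB^a$ with $a\le_\o\cA\ff^\cA(\va)$. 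Since $\s(r)$ is the most general size of $r$, the derivation gives $\s(r)\le_\o\cA a$, hence $\s(r)\le_\o\cA\ff^\cA(\va)$; as $(\kA,<_\kA)$ is a model of $(\cA,\lts)$ and $\G(x)=\fB^x$ makes each variable carry its own size, interpreting yields $\I{r}\mu=\I{\s(r)}\mu\le\I{\ff^\cA(\va)}\mu=\I{\ff\vp\vl}\mu$ for every $\mu:\cX\a M$. Thus $\I{l}\mu\ge_\cM\I{r}\mu$ and $\cM$ is a quasi-model.

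The main obstacle is the remaining condition: termination of $lab(\cR)\cup Decr$. Here labelling pays off, since it records the decreasing measure. I would order the labelled signature $\o\cF$ by $\ff_a\succ\fg_b$ iff $\ff\gtf\fg$, or $\ff\eqf\fg$ and $a>_\ff^\kA b$; this precedence is well-founded because $\ge_\cF$ is well-founded and, by the compatibility conditions on $\z_\ff$, the orders $<_\ff^\kA$ are well-founded and shared within each $\eqf$-class. The rules of $Decr$ strictly decrease the head symbol in $\succ$ while fixing the arguments. For a labelled rule $lab^\mu(\ff\vp\vl)\a lab^\mu(r)$, the closure derivation of $r$ (Figure~\ref{fig-thc}) introduces symbols $\fg\ltf\ff$ freely and a recursive symbol $\fg\eqf\ff$ only under $\z_\ff^\cA(\vb)<_\ff^\cA\z_\ff^\cA(\va)$, so by compatibility of $\z$ across $\cA$ and $\kA$ the label $\z_\ff^\kA(\I\vb\mu)$ of every recursive call is $\succ$-below the head label $\z_\ff^\kA(\I\va\mu)$.

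I would then show, by induction on this closure derivation, that each labelled rule is oriented by the recursive path order $>_{\mr{rpo}}$ generated by $\succ$ (the pattern variables feeding the arguments of recursive calls are subterms of the left-hand side), giving $lab(\cR)\cup Decr\sle{>_{\mr{rpo}}}$, which is well-founded. Finally, first-order terms contain no $\b$-redex, so $\ab$ is empty and the termination of $\ar$ delivered by Theorem~\ref{thm-fosl} yields termination of $\ab\cup\ar$. The points demanding most care are the precise match between $\thc^i$ derivations and $>_{\mr{rpo}}$ in this last induction, and treating $\infty$-annotations uniformly with the finite sizes.
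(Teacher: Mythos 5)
Your overall plan (use the size semantics as the model, $\z_\ff^\kA$ as the labelling, then orient $lab(\cR)\cup Decr$ by a precedence combining $\gtf$ with $>_\ff^\kA$) is indeed the paper's plan, but the way you handle $\infty$ contains a genuine gap that breaks the final step. You adjoin a top element $\infty$ to the carrier $M$ and interpret every symbol with $\ff^\cA=\infty$ (and every $\infty$-annotation) by it. But in Theorem \ref{thm-fosl}, $lab(\cR)$ contains $lab^\mu(l)\a lab^\mu(r)$ for \emph{every} valuation $\mu:\cX\a M$, hence also for $\mu\equiv\infty$. Under such a valuation all strict size comparisons collapse: from $x\lts\fs x$ you cannot conclude $\I{x}\mu<\I{\fs x}\mu$, since both sides equal the top element (the model hypothesis on $(\kA,<_\kA)$ gives strictness only for $\mu:\cX\a\kA$, and no order on $M$ can satisfy $\infty<\fs^\cM(\infty)$ when $\infty$ is a top element). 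Concretely, for the division system of Section \ref{sec-sbt}, your $lab(\cR)$ contains $/_\infty(\fs x)y\a\fs(/_\infty(-_\infty xy)y)$ together with $-_\infty(\fs x)(\fs y)\a -_\infty xy$ and $-_\infty x\mf{0}\a x$: head and recursive call carry the \emph{same} label, so precedence termination fails, and no RPO can orient the first rule (the needed comparison $\fs x>_{\mr{rpo}}-_\infty xy$ fails on the variable $y$). In fact these labelled rules are just a renamed copy of the original non-simply-terminating rules, so the labelling has bought nothing for them; your claimed induction giving $lab(\cR)\cup Decr\sle{>_{\mr{rpo}}}$ is therefore false.

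The paper avoids this by never putting $\infty$ in the model: it takes $M=\kA=\bN$ and, for a defined symbol with $\ff^\cA=\infty$, defines $\ff^\cM(\vec\ka)=max(\{0\}\cup\{\I{r}\mu\mid\ff\vl\a r\in\cR,\,\mu:\cX\a\kA,\,\I\vl\mu\le\vec\ka\})$, by induction on the well-founded relation combining $\gtf$ with $>_\ff^\kA$, in the style of predictive labelling. This is exactly where the two finiteness hypotheses of the theorem are used: since $\vl$ are patterns, constructor interpretations are monotonic and strictly extensive, each base type has finitely many constructors, and $\cR$ is finitely branching, the set over which the max is taken is finite, so $\ff^\cM$ is well defined in $\bN$. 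Your proof never invokes these hypotheses --- your claim that they ensure $\kA=\bN$ is a misattribution, since $\kA=\bN$ holds for every first-order system --- which is the telltale sign that the $\ff^\cA=\infty$ case was not actually treated. Relatedly, your identity $\I{t}\mu=\I{\s(t)}\mu$ is only available for terms containing no $\infty$-interpreted symbol; the paper recovers the strict label decrease at recursive calls precisely by observing that the closure comparison $\z_\ff^\cA(\s(\vm))<_\ff^\cA\z_\ff^\cA(\s(\vl))$ is strict in $\cA$ (not $\o\cA$), which forces $\s(\vl),\s(\vm)$ to contain no such symbol, whence $\I\vl\mu=\I{\s(\vl)}\mu$ and $\I\vm\mu=\I{\s(\vm)}\mu$ for all $\mu:\cX\a\bN$.
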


\begin{proof}
For the interpretation domain, we take $M=\kA=\bN$ which has a
structure of poset with $\le_\cM=\le_\kA=\le_\bN$.

If $\ff^\cA$ is not the constant function equal to $\infty$
($\ff^\cA\neq\infty$ for short), which is the case of constructors,
then let $\ff^\cM(\vec\ka)=\I{\ff^\cA(\valpha)}\mu$ where
$\valpha\mu=\vec\ka$.

When $\ff^\cA=\infty$, we proceed in a way similar to {\em predictive
labelling} {}\cite{hirokawa06rta}, a variant of semantic labelling
where only the semantics of {\em usable symbols} need to be given when
$\cM$ is a {\em $\sqcup$-algebra} (all {\em finite} subsets of $M$
have a lub wrt $\le_\cM$), which is the case of $\bN$. Here, the
notions of usable symbols and rules are not necessary and a semantics
can be given to all symbols thanks to the strong assumptions of
SB-termination.

Let \comment{$>^\kA$ be the lexicographic combination of $\gtf$ and $>_\ff^\kA$
as follows:} $(\ff,\vx)>^\kA(\fg,\vy)$ if $\ff\gtf\fg$ or $\ff\eqf\fg$
and $\z_\ff^\kA(\vx)>_\ff^\kA\z_\ff^\kA(\vy)$. The relation $>^\kA$ is
well-founded since the relations $\gtf$ and $>_\ff^\kA$ are
well-founded. We then define $\ff^\cM$ by induction on $>^\kA$ by
taking $\ff^\cM(\vec\ka)=max(\{0\}\cup\{\I{r}\mu\mid\ff\vl\a
r\in\cR,\mu:\cX\a\kA,\I\vl\mu\le\vec\ka\})$. This function is well
defined since:

\begin{itemize}
\item
For each subterm $\fg\vm$ in $r$,
$(\ff,\s(\vl))>^\cA(\fg,\s(\vm))$. Assume that $\ff\eqf\fg$. Then,
$\s(\vl)\gts\s(\vm)$. Hence, for each symbol $\ff$ occuring in $\vl$
or $\vm$, $\ff^\cA\neq\infty$. Therefore, $\I\vl\mu=\I{\s(\vl)}\mu$,
$\I\vm\mu=\I{\s(\vm)}\mu$ and $(\ff,\I\vl\mu)>^\kA(\fg,\I\vm\mu)$.
\item
The set $\{(\ff\vl\a r,\mu)\mid\ff\vl\a r\in\cR,\I\vl\mu\le\vec\ka\}$
is finite. Indeed, since $\vl$ are patterns and constructors are
interpreted by monotonic and strictly extensive functions (\ie
$\fc^\cA(\valpha)\ges\fs(\max(\alpha_i\mid i\in\ind(\fc)))$),
$\I\vl\mu$ is strictly monotonic wrt $\mu$ and the height of $\vl$. We
cannot have an infinite set of $\vl$'s of bounded height since, for
each base type $\fB$, the set of constructors of type $\fB$ is
finite. And we cannot have an infinite set of $r$'s since $\cR$ is
finitely branching.
\end{itemize}

We do not label the constructors, \ie we take any singleton set for
$S^\fc$ and the unique (constant) function from $M^n$ to $S^\fc$ for
$\pi_\fc$. For any other symbol $\ff$, we take $S^\ff=D_\ff^\kA$ which
is well-founded wrt $>_\ff$, and $\pi_\ff=\z_\ff^\kA$.


\begin{enumerate}
\item
$\cM$ is a quasi-model of $\cR$:
\begin{itemize}
\item
Let $\ff:^s\vP\A\vec\fB^\valpha\A\fB^{\ff^\cA(\valpha)}$, $l\a
r\in\cR$ with $l=\ff\vp\vl$, and $\mu:\cX\a M$. We have
$\I{l}\mu=\ff^\cM(\vec\ka)$ where $\vec\ka=\I\vl\mu$. If
$\ff^\cA=\infty$, then
$\ff^\cM(\vec\ka)=max(\{0\}\cup\{\I{r}\mu\mid\ff\vl\a
r\in\cR,\mu:\cX\a\kA,\I\vl\mu\le\vec\ka\})$ and $\I{l}\mu\ge\I{r}\mu$.
Assume now that $\ff^\cA\neq\infty$. Since $\G\thc r:^i\fB^a$ and
$a\le_\o\cA\ff^\cA(\va)$, we have $\s(r)=a\le_\o\cA\ff^\cA(\va)=\s(l)$
where $\va=\s(\vl)$. By definition of $\G$ and $\s$, for each $i$,
$a_i\neq\infty$ ($\va\neq\infty$ for short). Therefore,
$\s(l)\neq\infty$ and $\s(r)\les\s(l)$. Hence,
$\I{l}\mu=\s(l)\mu\le_\kA\s(r)\mu=\I{r}\mu$ since $\le_\kA$ is a model
of $\les$.

\item
If $\ff$ is a non-recursive constructor, then $\ff^\cM(\vec\ka)=0$ is
monotonic. If $\ff$ is a recursive constructor, then
$\ff^\cM(\vec\ka)=sup\{\ka_i\mid i\in\ind(\fc)\}+1$ is monotonic. If
$\ff^\cA\neq\infty$, then $\ff^\cM(\vec\ka)=\I{\ff^\cA(\valpha)}\mu$
where $\valpha\mu=\vec\ka$ is monotonic since $\ff^\cA$ is monotonic
by assumption. Finally, if $\ff^\cA=\infty$, then
$\ff^\cM(\vec\ka)=max(\{0\}\cup\{\I{r}\mu\mid\ff\vl\a
r\in\cR,\mu:\cX\a\kA,\I\vl\mu\le\vec\ka\})$ is monotonic.
\end{itemize}

\item
If $\ff$ is a defined symbol, then the function $\pi_\ff$ is monotonic
by assumption. If $\ff$ is a constructor, then the constant function
$\pi_\ff$ is monotonic too.

\item
We now prove that $\a_{lab(\cR)\cup Decr}$ is precedence-terminating
(PT), \ie there is a well-founded relation $>$ on symbols such that,
for each rule $\ff\vl\a r\in lab(\cR)\cup Decr$, every symbol
occurring in $r$ is strictly smaller than $\ff$
{}\cite{middeldorp96cade}.

Let \comment{$<$ be the lexicographic combination of $\ltf$ and $<_\ff^\kA$ as
follows:} $\fg_a<\ff_b$ if $\fg\ltf\ff$ or $\fg\eqf\ff$ and $a<_\ff^\kA
b$. The relation $>$ is well-founded since both $\gtf$ and $>_\ff^\kA$
are well-founded.

$Decr$ is clearly PT wrt $>$. Let now $\ff\vl\a r\in\cR$, $\mu:\cX\a
M$ and $\fg\vt$ be a subterm of $r$. The label of $\ff$ is
$a=\pi_\ff(\I{\vl}\mu)=\z_\ff^\kA(\I{\s(\vl)}\mu)$ and the label of
$\fg$ is $b=\z_\ff^\kA(\I{\s(\vm)}\mu)$. By assumption,
$(\ff,\vl)>^\cA(\fg,\vm)$. Therefore, $a>_\ff^\kA b$.\qed
\end{enumerate}
\end{proof}

It is interesting to note that we could also have taken $M=\cA$,
assuming that $<_\ff^\cA$ is stable by substitution
($\z_\ff^\cA(\va\t)<_\ff^\cA\z_\ff^\cA(\vb\t)$ whenever
$\z_\ff^\cA(\va)<_\ff^\cA\z_\ff^\cA(\vb)$). The system labelled with
$\cA$ is a syntactic approximation of the system labelled with
$\kA$. Although less powerful {\em a priori}, it may be interesting
since it provides a {\em finite} representation of the infinite
$\kA$-labelled system.

\comment{
For instance, by taking $\mf{0}^\cM=0$, $\fs^\cM(x)=\fs x$,
$-^\cM(x,y)=x$ and $/^\cM(x,y)=x$, and by labelling $-$ and $/$ by the
semantics of their first argument, we get that every $\kA$-labelled
rule is an $i$-instance of one of the following $\cA$-labelled rules:

\begin{center}
\begin{tabular}{c@{\hsp[1cm]}c}
\begin{rew}
-_ix\mf{0} & x & ~(i\in\cX)\\
-_0\mf{0}x & \mf{0}\\
-_{\fs i}(\fs x)(\fs y) & -_ixy & ~(i\in\cX)\\
\end{rew}
&\begin{rew}
/_0\mf{0}x & \mf{0}\\
/_{\fs i}(\fs x)y & \fs(/_i(-_ixy)y) & ~(i\in\cX)\\
\end{rew}
\end{tabular}
\end{center}
}

Finally, we see from the proof that the system does not need to be
constructor:\comment{ if the following conditions are satisfied (an example is
given in Appendix \version{D}{\ref{sec-non-constr}}):}

\begin{theorem}
\label{thm-sbt-plus-plus}
Theorem {}\ref{thm-fo} holds for any (non-constructor) system $\cR$
such that, for each rule $\ff\vl\a r\in\cR$ with $\ff^\cA=\infty$ and
subterm $\fg\vm$ in $\vl$:

\begin{itemize}
\item
$\fg^\cA$ is monotonic and strictly extensive:
$\fg^\cA(\valpha)\ges\fs(\max(\alpha_i\mid i\in\ind(\fc)))$,
\item
if $\fg^\cA=\infty$, then $\fg\ltf\ff$ or $\fg\eqf\ff$ and
$\z_\ff^\cA(\s(\vm))<_\ff^\cA\z_\ff^\cA(\s(\vl))$.
\end{itemize}
\end{theorem}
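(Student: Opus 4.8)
The plan is to revisit the proof of Theorem~\ref{thm-fo} and identify exactly where the constructor hypothesis was used, then check that the weaker conditions of this theorem suffice to recover each such step. Inspecting the proof of Theorem~\ref{thm-fo}, the constructor assumption on $\cR$ enters in precisely two places, both concerning the well-definedness of $\ff^\cM$ when $\ff^\cA=\infty$. First, it is used to guarantee that for each subterm $\fg\vm$ of a right-hand side $r$ we have $(\ff,\s(\vl))>^\cA(\fg,\s(\vm))$ and that $\I\vl\mu=\I{\s(\vl)}\mu$, so that the recursive definition of $\ff^\cM$ by induction on $>^\kA$ is legitimate. Second, it guarantees that the set of rules-plus-valuations indexing the supremum is finite, using that the $\vl$ are patterns whose interpretations are strictly monotonic in the height. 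All other parts of the proof---the quasi-model verification for rules with $\ff^\cA\neq\infty$, the monotonicity of $\pi_\ff$, and the precedence-termination argument---never refer to the shape of the left-hand sides and carry over verbatim.

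Accordingly, I would argue that the two displayed hypotheses of Theorem~\ref{thm-sbt-plus-plus} are exactly what is needed to restore these two steps in the absence of patterns. For the first step, when $\ff^\cA=\infty$ and $\fg\vm$ is a subterm of a left-hand side argument (the critical new case, since now $\vl$ need not be patterns), the second bullet gives either $\fg\ltf\ff$ or $\fg\eqf\ff$ with $\z_\ff^\cA(\s(\vm))<_\ff^\cA\z_\ff^\cA(\s(\vl))$; in both cases $(\ff,\s(\vl))>^\cA(\fg,\s(\vm))$, which is precisely the inequality that made the induction on $>^\kA$ well-founded. For the finiteness step, the first bullet---monotonicity and strict extensiveness of $\fg^\cA$ for every symbol $\fg$ occurring in the left-hand side arguments---ensures that $\I\vl\mu$ remains strictly monotonic in $\mu$ and in the height of $\vl$, even though $\vl$ may now contain defined symbols rather than only constructors. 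Combined with the finite-branching hypothesis and the finiteness of the constructor sets (both inherited from the statement of Theorem~\ref{thm-fo}), this again bounds the index set of the supremum to a finite one, so $\ff^\cM$ is well defined.

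Having re-established well-definedness of $\ff^\cM$, I would then observe that the remaining three numbered items of the proof of Theorem~\ref{thm-fo} go through unchanged: the quasi-model condition, the monotonicity of the labelling functions, and precedence-termination of $\a_{lab(\cR)\cup Decr}$ all depend only on the size algebra, the ordering data $(\z_\ff^\cA,<_\ff^\cA,\z_\ff^\kA,<_\ff^\kA)$, and the typing derivation $\G\thc^i r:\fB^a$, none of which require $\vl$ to be patterns. Invoking Theorem~\ref{thm-fosl} with the same algebra $\cM$ on $M=\bN$ and the same labelling system then yields termination exactly as before.

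The main obstacle I anticipate is the finiteness argument for the index set of the supremum. In the original proof, strict monotonicity of $\I\vl\mu$ in the \emph{height} of $\vl$ rested on constructors being strictly extensive; with defined symbols now allowed in $\vl$, I must lean entirely on the first bullet's extensivity hypothesis $\fg^\cA(\valpha)\ges\fs(\max(\alpha_i\mid i\in\ind(\fc)))$ to recover that each layer of $\vl$ strictly increases the size, and I should be careful that this applies to all symbols occurring in the left-hand side arguments, not merely to the head. The other subtlety is making sure the well-foundedness induction still ranges correctly: the definition of $\ff^\cM$ now recurses on interpretations of possibly-defined subterms $\fg\vm$, so I must confirm that whenever $\fg\eqf\ff$ the strict decrease $\z_\ff^\cA(\s(\vm))<_\ff^\cA\z_\ff^\cA(\s(\vl))$ genuinely transfers to a strict $>^\kA$-decrease via the compatibility assumption relating $<_\ff^\cA$ and $<_\ff^\kA$.
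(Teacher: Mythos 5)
Your proposal takes the same route as the paper --- in fact the paper gives no separate proof of Theorem~\ref{thm-sbt-plus-plus} at all: it introduces it with the single remark that one ``sees from the proof'' of Theorem~\ref{thm-fo} that the constructor hypothesis can be dropped, and your reading of the two bullets (the second restores the $>^\kA$-decrease that legitimates the recursive definition of $\ff^\cM$ now that left-hand sides may contain $\infty$-interpreted symbols; the first restores strict monotonicity of $\I{\vl}\mu$ in $\mu$ and in the height, for the finiteness of the index set of the $max$) is exactly the intended one, as the worked example following the theorem confirms.

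There is, however, one point where your audit is too quick, and it is inherited by your argument. You claim that the quasi-model verification for rules with $\ff^\cA\neq\infty$ and the precedence-termination step ``never refer to the shape of the left-hand sides''. In the printed proof of Theorem~\ref{thm-fo} they do: the quasi-model case $\ff^\cA\neq\infty$ uses ``by definition of $\G$ and $\s$, for each $i$, $a_i\neq\infty$'', and the labelling step uses $\pi_\ff(\I{\vl}\mu)=\z_\ff^\kA(\I{\s(\vl)}\mu)$; both rest on $\s(\vl)\neq\infty$ and $\I{\vl}\mu=\I{\s(\vl)}\mu$, which hold in the constructor case precisely because patterns are built from constructors and every constructor has a proper ($\neq\infty$) interpretation. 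In a non-constructor system, nothing in the theorem's bullets --- which constrain only rules whose \emph{head} satisfies $\ff^\cA=\infty$ --- forbids a symbol $\fg$ with $\fg^\cA=\infty$ from occurring in the left-hand side of a rule whose head has a proper interpretation. Concretely, take $\ff^\cA(\alpha)=\alpha$, $\fg^\cA=\infty$, $\fg\ltf\ff$, and the rules $\fg x\a x$ and $\ff(\fg x)\a\fs(\fg x)$ on $\fN$: SB-termination and all stated hypotheses hold (the bullets apply only to $\fg x\a x$, vacuously), yet the interpretation constructed in the proof gives $\I{\ff(\fg x)}\mu=\mu(x)<\mu(x)+1=\I{\fs(\fg x)}\mu$, so it is not a quasi-model and the argument breaks. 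This hole is in the paper's own statement as much as in your proposal, so as a reconstruction of the paper's reasoning you are faithful --- gap included --- but the assertion that the remaining steps carry over ``verbatim'' is the one place a complete proof must do more work, e.g.\ by additionally requiring that $\infty$-interpreted symbols occur in no left-hand side of a rule with $\ff^\cA\neq\infty$ (as is the case in the paper's example, where all left-hand sides are built from $\mf{0}$, $\fs$ and $+$).
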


{\bf Example:} assuming that $\fA$ is the $\A$-type constructor, then
the expression $\fF nuv$ represents the set of $n$-ary functions from
$u$ to $v$.

\vsp[-3mm]
\begin{center}
\begin{tabular}{cc}
\begin{rew}
+\mf{0}y & y\\
+(\fs x)y & \fs(+xy)\\
+(+xy)z & +x(+yz)\\
\end{rew}
&\quad
\begin{rew}
\fF\mf{0}uv & v\\
\fF(\fs x)uv & \fA u(\fF xuv)\\
\fF(+xy)uv & \fF xu(\fF yuv)\\
\end{rew}
\end{tabular}
\end{center}
\vsp[-3mm]

Take $+^\cA(x,y)=\z_+(x,y)=a=2x+y+1$, $\fF^\cA=\infty$ and
$\z_\fF(x,u,v)=x$. The interpretation of $\fF^\cM$ is well-defined
since $x<a$ and $y<a$. The labelled system that we obtain (where
$b=2y+z+1$) is precedence-terminating:

\vsp[-3mm]
\begin{center}
\begin{tabular}{cc}
\begin{rew}
+_{y+1}\mf{0}y & y\\
+_{a+2}(\fs x)y & \fs(+_axy)\\
+_{2a+z+1}(+_axy)z & +_{2x+b+1}x(+_byz)\\
\end{rew}
&
\begin{rew}
\fF_{0}\mf{0}uv & v\\
\fF_{x+1}(\fs x)uv & \fA u(\fF_xxuv)\\
\fF_a(+_axy)uv & \fF_xxu(\fF_yyuv)\\
\end{rew}
\end{tabular}
\end{center}
\vsp[-6mm]

\section{Higher-order semantic labelling}
\label{sec-hosl}

Semantic labelling was extended by Hamana {}\cite{hamana07ppdp} to
second-order Inductive Data Type Systems (IDTSs) with higher-order
pattern-matching {}\cite{blanqui00rta}. IDTSs are a typed version of
Klop's Combinatory Reduction Systems (CRSs) {}\cite{klop93tcs} whose
categorical semantics based on binding algebras and $\cF$-monoids
{}\cite{fiore99lics} is studied by the same author and proved complete
for termination {}\cite{hamana05rta}.

The fundamental theorem of higher-order semantic labelling can be
stated exactly as in the first-order case, but the notion of model is
more involved.



\smallskip

{\bf CRSs and IDTSs.}\comment{In CRSs, a term is either a variable
$x\in\cX$, an abstraction $\l xt$ or the application $f(t_1,\ldots,
t_n)$ of a function symbol $f\in\cF$ of arity $n$ to $n$ terms
$t_1,\ldots,t_n$.} In CRSs, function symbols have a fixed arity. {\em
Meta-terms} extend terms with the application $Z(t_1,\ldots,t_n)$ of
a meta-variable $Z\in\cZ$ of arity $n$ to $n$ meta-terms
$t_1,\ldots,t_n$.

An assignment $\t$ maps every meta-variable of arity $n$ to a term of
the form $\l x_1..\l x_nt$. Its application to a meta-term $t$,
written $t\t$, is defined as follows:

\vsp[-3mm]
\begin{itemize}
\item
$x\t=x$, $(\l xt)\t=\l x(t\t)$ and
$\ff(t_1,\ldots,t_n)\t=\ff(t_1\t,\ldots,t_n\t)$;
\item
for $\t(Z)=\l x_1..\l x_nt$,
$Z(t_1,\ldots,t_n)\t=t\{x_1\to t_1\t,\ldots,x_n\to t_n\t\}$.
\end{itemize}
\vsp[-3mm]

A rule is a pair of {\em meta-terms} $l\a r$ such that $l$ is a
higher-order pattern {}\cite{miller89elp}.\comment{ The relation generated by a
set $\cR$ of rules, written $\ar$, is defined as follows:

\begin{itemize}
\item
$l\t\ar t\t$ for each $l\a r\in\cR$ and $\t:\cZ\a\cT$,
\item
$\l xt\ar \l xu$ and
$\ff(\ldots,t,\ldots)\ar\ff(\ldots,u,\ldots)$ if $t\ar u$.
\end{itemize}

For instance, $\b$-reduction is generated by the rule
$@(\l(\l xZ(x)),X)\a Z(X)$, and $\eta$-reduction by the {\em
non-conditional} rule $\l(\l x@(Z,x))\a Z$. Note that, for $\b$, $Z$ is
a meta-variable of arity $1$ while, for $\eta$, $Z$ is of arity $0$,
so that $Z$ cannot be substituted by a term with $x$ free in it.
}

In IDTSs, variables, meta-variables and symbols are equipped with
types over a discrete category $\bB$ of base types. However, Hamana
only considers {\em structural meta-terms} where abstractions only
appear as arguments of a function symbol, variables are restricted to
base types, meta-variables to first-order types and function symbols
to second-order types. But, as already noticed by Hamana, this is
sufficient to handle any rewrite system (see Section
{}\ref{sec-ho}). Let $I^\cZ_B(\G)$ be the set of structural meta-terms
of type $B$ in $\G$ whose meta-variables are in $\cZ$.

\comment{Formally, the set $I^\cZ_B(\G)$ of structural meta-terms of type $B$
in $\G$ whose meta-variables are in $\cZ$, written $\G\th_I^\cZ t:B$,
is defined as follows:

\begin{itemize}
\item
$\G\th_I^\cZ x:B$ if $(x,B)\in\G$,
\item
$\G\th_I^\cZ Z(\vt):B$ if $Z:\vB\A B\in\cZ$ and, for each $i$,
$\G\th_I^\cZ t_i:B_i$.
\item
$\G\th_I^\cZ \ff(\l\vx_1t_1,\ldots,\l\vx_nt_n):B$ if $\ff:(\vB_1\A
B_1)\A\ldots\A(\vB_n\A B_n)\A B$ and, for each $i$,
$\G,\vx:\vB_i\th_I^\cZ t_i:B_i$ for pairwise distinct fresh variables
$\vx$.
\end{itemize}
}


\smallskip

{\bf Models.} The key idea of binding algebras {}\cite{fiore99lics} is
to interpret variables by natural numbers using De Bruijn levels
\comment{{}\cite{debruijn72}}, and to handle bound variables by extending the
interpretation to typing environments.

Let $\bF$ be the category whose objects are the finite cardinals and
whose arrows from $n$ to $p$ are all the functions from $n$ to
$p$. Let $\bE$ be the (slice) category of typing environments whose
objects are the maps $\G:n\a\bB$ and whose arrows from $\G:n\a\bB$ to
$\D:p\a\bB$ are the functions $\r:n\a p$ such that $\G={\D\circ\r}$.

Given $\G:n\a\bB$, let $\G+B:n+1\a\bB$ be the environment such that
$(\G+B)(n)=B$ and $(\G+B)(k)=\G(k)$ if $k<n$.

Let $\bM$ be the functor category ${(\Set^\bE)}^\bB$. An object of
$\bM$ (presheaf) is given by a family of sets $M_B(\G)$ for every base
type $B$ and environment $\G$ and, for every base type $B$ and arrow
$f:\G\a\D$, a function $M_B(f):M_B(\G)\a M_B(\D)$ such that
$M_B(id_\G)=id_{M_B(\G)}$ and $M_B(f\circ g)=M_B(f)\circ M_B(g)$. An
arrow $\alpha:M\a N$ in $\bM$ is a natural transformation, \ie a
family of functions $\alpha_B(\G):M_B(\G)\a N_B(\G)$ such that, for
each $\r:\G\a\D$, $\alpha_B(\D)\circ M_B(\r)=N_B(\r)\circ\alpha_B(\G)$.

Given $M\in\bM$, $\G\in\bE$ and $\vB\in\bB$, let $up_\G^\vB(M):M(\G)\a
M(\G+\vB)$ be the arrow equal to $M(id_\G+0_\D)$ where $0_\D$ is the
unique morphism from $0$ to $\D$.


An {\em $\cX+\cF$-algebra} $\cM$ is given by a presheaf $M\in\bM$, an
interpretation of variables $\io:\cX\a\cM$ and, for every symbol
$\ff:(\vB_1\A B_1)\A\ldots\A(\vB_n\A B_n)\A B$ and environment $\G$,
an arrow $\ff^\cM(\G):\prod_{i=1}^n M_{B_i}(\G+\vB_i)\a M_B(\G)$.

The category $\bM$ forms a monoidal category with unit $\cX$ and
product $\bullet$ such that $(M\bullet N)_B(\G)$ is the set of
equivalence classes on the set of pairs $(t,\vu)$ with $t\in M_B(\D)$
and $u_i\in N_{\D(i)}(\G)$ for some $\D$, modulo the equivalence
relation $\sim$ such that $(t,\vu)\sim(t',\vu')$ if there is
$\r:\D\a\D'$ for which $t\in M_B(\D)$, $t'=M_B(\r)(t)$ and
$u'_{\r(i)}=u_i$.

To interpret substitutions, $M$ must be an {\em $\cF$-monoid}, \ie a
monoid $(M,\mu:M^2\a M)$ compatible with the structure of
$\cF$-algebra {}\cite{hamana07ppdp} (see Appendix
\version{E}{\ref{sec-monoid}}).

\comment{
\begin{itemize}
\item
$\mu_B(\G)(\io_B(\D)(i),\vu)=u_i$;
\item
$\mu_B(\G)(t,\io_{\D(1)}(\G)(1)\ldots\io_{\D(p)}(\G)(p))=t$;
\item
for $t\in M_B(\T)$, $u_i\in M_{\T(i)}(\D)$ and $v_i\in M_{\D(i)}(\G)$,\\
$\mu_B(\G)(\mu_B(\D)(t,\vu),\vv)=\mu_B(\G)(t,\mu_{\T(1)}(\G)(u_1,\vv)\ldots\mu_{\T(p)}(\G)(u_p,\vv))$;
\item
for $\ff:(\vB_1\A B_1)\A\ldots\A(\vB_n\A B_n)\A B$ and
$\G_i=\G+\vB_i$,\\ $\mu_B(\G)(\ff^\cM(\D)(\vt),\vu)
=\ff^\cM(\G)(\mu_{B_1}(\G_1)(t_1,\vv_1),\ldots,\mu_{B_n}(\G_n)(t_n,\vv_n))$\\
where $v_{i,j}=up_\G^{\vB_i}(u_j)$ if $j<|\D|$, and
$v_{i,j}=|\G|+j-|\D|$ otherwise.
\end{itemize}
}

The presheaf $I^\vide$ equipped with the product
$\mu_B(\G)(t,\vu)=t\{i\to u_i\}$ (simultaneous substitution) is
initial in the category of $\cF$-monoids {}\cite{hamana05rta}. Hence,
for each $\cF$-monoid $\cM$, there is a unique morphism
$!^\cM:I^\vide\a M$.

\comment{
In the category of $\cF$-monoids, the presheaf of meta-terms $I^\cZ$
equipped with the product $\mu_B(\G)(t,\vu)=t\{x_1\to
u_1,\ldots,x_n\to u_n\}$ (simultaneous substitution) is free. Hence,
given an $\cF$-monoid $M$, any valuation $\phi:\cZ\a M$ can be
uniquely extended into an $\cF$-monoid morphism $\phi^*:I^\cZ\a M$
such that:

\begin{itemize}
\item
$\phi^*_B(\G)(x)=\io_B(\G)(x)$;
\item
for $Z:\vB\A B$,\\
$\phi^*_B(\G)(Z(t_1,\ldots,t_n))
=\mu_B(\G)(\phi_B(\vB)(Z),\phi^*_{B_i}(\G)(t_1)\ldots\phi^*_{B_n}(\G)(t_n))$;
\item
for $\ff:(\vB_1\A B_1)\A\ldots\A(\vB_n\A B_n)\A B$ and
$\G_i=\G,\vx_i:\vB_i$,\\
$\phi^*_B(\G)(\ff(\l\vx_1t_1,\ldots,\l\vx_nt_n))
=(\ff^\cM)_B(\G)(\phi^*_{B_1}(\G_1)(t_1),\ldots,\phi^*_{B_n}(\G_n)(t_n))$.
\end{itemize}

In the following, we assume given an $\cF$-monoid $\cM$ and let
$!^\cM=\vide^*:I^\vide\a M$.
}


\smallskip

{\bf Labelling.} As in the first-order case, for each $\ff:(\vB_1\A
B_1)\A\ldots\A(\vB_n\A B_n)\A B$, we assume given a non-empty poset
$(S^\ff,\le_\ff)$ for {\em labels} and a {\em labelling function}
$\pi_\ff(\G):\prod_{i=1}^nM_{B_i}(\G+\vB_i)\a S^\ff$. Let
$\o\cF_n=\{\ff_a\mid\ff\in\cF_n,a\in S^\ff\}$. Note that the set of
labelled meta-terms has a structure of $\cF$-monoid
{}\cite{hamana07ppdp}.

The {\em labelling} of a meta-term wrt a valuation $\t:\cZ\a I^\vide$
is defined as follows:

\begin{itemize}
\item
$lab^\t_B(\G)(x)=x$;
\item
$lab^\t_B(\G)(Z(t_1,\ldots,t_n))=
Z(lab^\t_B(\G)(t_1),\ldots,lab^\t_B(\G)(t_n))$;
\item
for $\ff:(\vB_1\A B_1)\A\ldots\A(\vB_n\A B_n)\A B$ and
$\G_i=\G,\vx_i:\vB_i$,\\
$lab^\t_B(\G)(\ff(\l\vx_1t_1,\ldots,\l\vx_nt_n))
=\ff_a(lab^\t_{B_1}(\G_1)(t_1),\ldots,lab^\t_{B_n}(\G_n)(t_n))$\\
where
$a=\pi_\ff(!^\cM_{B_1}(\G_1)(t_1\t),\ldots,!^\cM_{B_n}(\G_n)(t_n\t))$.
\end{itemize}

\comment{Given a labelled term $t$, let $|t|$ be the term obtained after
removing all labels. Note that the presheaf of labelled meta-terms
$\o{I}^\cZ$ has a structure of $\cF$-monoid for each valuation
$\t:\cZ\a I^\vide$ by taking:

\begin{itemize}
\item
$\mu^\t_B(\G)(i,\vu)=u_i$;
\item
for $Z:\vB\A B$,
$\mu^\t_B(\G)(Z(t_1,\ldots,t_n),\vu)=
Z(\mu^\t_{B_1}(\G)(t_1),\ldots,\mu^\t_{B_n}(\G)(t_n))$;
\item
for $\ff:(\vB_1\A B_1)\A\ldots\A(\vB_n\A B_n)\A B$, $\G_i=\G+\vB_i$
and $u_i\in\o{I}^{\t,\cZ}_{\D(i)}(\G)$,\\
$\mu^\t_B(\G)(\ff_a(\l\vx_1t_1,\ldots,\l\vx_nt_n),\vu)
=\ff_b(\mu^\t_{B_1}(\G_1)(t_1,\vv_1),\ldots,\mu^\t_{B_n}(\G_n)(t_n,\vv_n))$\\
where $b=\pi^\ff_B(\G)(!^\cM_{B_1}(\G_1)(|t_1|\t),\ldots,
!^\cM_{B_n}(\G_n)(|t_n|\t))$,\\ $v_{i,j}=up_\G^{\vB_i}(u_j)$ if
$j<|\D|$, and $v_{i,j}=|\G|+j-|\D|$ otherwise.
\end{itemize}
}

We can now state Hamana's theorem for higher-order semantic labelling.

\begin{theorem}[\cite{hamana07ppdp}]
\label{th-hosl}
Given a structural IDTS $\cR$, an ordered $\cF$-algebra
$(\cM,\le_\cM)$ and a labelling system
$(S^\ff,\le_\ff,\pi_\ff)_{\ff\in\cF}$, the relation $\ar$ terminates
if:

\begin{enumerate}
\item
$(\cM,\le_\cM)$ is a quasi-model of $\cR$, that is:
\begin{itemize}
\item
for each $l\a r:T\in\cR$, $\t:\cZ\a I^\vide$ and $\G$,
$!^\cM_B(\G)(l\t)\ge_{M_B(\G)}!^\cM_B(\G)(r\t)$,
\item
for each $\ff\in\cF$, $\ff^\cM$ is monotonic;
\end{itemize}

\item
for each $\ff\in\cF$, $\pi_\ff$ is monotonic;

\item
the relation $\a_{lab(\cR)\cup Decr}$ terminates, where:\\
$lab(\cR)=\{lab_B^\vide(\G)(l\t)\a lab_B^\vide(\G)(r\t)\mid l\a r:B\in\cR,
\t:\cZ\a I^\vide, \G\in\bE\}$,\\
$Decr=\{\ff_a(\ldots,\l\vx_iZ_i(\vx_i),\ldots)\a
\ff_b(\ldots,\l\vx_iZ_i(\vx_i),\ldots)\mid\ff\in\cF,{a>_\ff b}\}$.
\end{enumerate}
\end{theorem}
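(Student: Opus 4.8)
The plan is to adapt the first-order argument behind Theorem~\ref{thm-fosl} to the binding setting, the whole proof reducing to a single \emph{labelling lemma}: for ground terms (closed structural meta-terms), if $s\ar s'$ in one step, then $lab(s)$ rewrites to $lab(s')$ by one $\a_{lab(\cR)}$-step followed by finitely many $Decr$-steps, where $lab=lab^\vide$ is the labelling induced by the interpretation $!^\cM$. Once this holds, an infinite $\ar$-derivation $s_0\ar s_1\ar\cdots$ lifts to an infinite $\a_{lab(\cR)\cup Decr}$-derivation $lab(s_0)\a^+lab(s_1)\a^+\cdots$, contradicting hypothesis~(3); hence $\ar$ terminates.

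To prove the labelling lemma I would isolate three ingredients. First, a \emph{substitution lemma for $!^\cM$}: since $I^\vide$ is initial among $\cF$-monoids, $!^\cM$ is the unique $\cF$-monoid morphism and therefore commutes with the monoidal product $\mu$; concretely, for a rule $l\a r$ and an assignment $\t:\cZ\a I^\vide$ the value $!^\cM_B(\G)(l\t)$ is obtained from $!^\cM(l)$ by substituting via $\mu$ the interpretations $!^\cM(\t(Z))$ of the assigned terms. Second, a matching \emph{substitution lemma for labelling}: the presheaf of labelled meta-terms is again an $\cF$-monoid, so $lab^\vide(l\t)$ factors as $lab^\t(l)$ with each meta-variable $Z$ replaced by its labelled assignment $lab^\vide(\t(Z))$. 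This is what makes the labelled redex sitting in $lab(s)$ equal to the left side $lab^\vide_B(\G)(l\t)$ of a rule of $lab(\cR)$ (it is precisely to cover a redex appearing under binders, in the environment $\G$ fixed by those binders, that $lab(\cR)$ is quantified over all $\G\in\bE$), so that contracting it is a genuine $\a_{lab(\cR)}$-step. Third, a \emph{context congruence}: using that each $\ff^\cM(\G)$ is monotonic (hypothesis~(1)) together with the naturality of the presheaf structure maps, the inequality $!^\cM_B(\G)(l\t)\ge_{M_B(\G)}!^\cM_B(\G)(r\t)$ supplied by the quasi-model propagates outward, so the interpretation of every enclosing subterm of $s=C[l\t]$ dominates that of the corresponding subterm of $s'=C[r\t]$, \emph{in each of the extended environments} generated by the binders lying above the redex.

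With these in hand the lemma closes as in the first-order case. Contracting the labelled redex turns $lab(s)=lab(C)[lab^\vide_B(\G)(l\t)]$ into $lab(C)[lab^\vide_B(\G)(r\t)]$, where the labels of $lab(C)$ are still those computed from the old subterm $l\t$. Applying monotonicity of each $\pi_\ff$ (hypothesis~(2)) to the congruence inequalities shows that these context labels dominate the labels of $lab(s')=lab(C[r\t])$, so a finite run of $Decr$-steps rewrites $lab(C)$ into its $s'$-version while the already-correct subterm $lab^\vide_B(\G)(r\t)$, which $Decr$ never touches, is left unchanged.

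The main obstacle is the passage under binders in the congruence step. In the first-order proof one merely chases the interpretation up a ground context with a fixed valuation; here the redex may sit inside an argument $\l\vx_i t_i$, so its interpretation lives in an extended environment $\G_i=\G,\vx_i:\vB_i$ and must be transported back to $\G$ through the structure maps $up_\G^{\vB_i}$. Proving that $\ge_{M_B(\G)}$ is preserved by $\ff^\cM(\G)$ composed with these transports, and that each $\pi_\ff$ sees exactly the transported interpretations, is the delicate bookkeeping; it is the reason the monotonicity and quasi-model conditions are stated environment-wise and why $Decr$ keeps the bound-variable arguments $\l\vx_iZ_i(\vx_i)$ rigid. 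The meta-variable substitution adds a second layer: one must verify that a symbol introduced by $\t(Z)$ receives the same label whether it is computed inside the contracted redex or inside the corresponding rule of $lab(\cR)$, which follows from the $\cF$-monoid coherence of labelling but needs care because $Z$ may occur under different binder prefixes in $l$ and in $r$.
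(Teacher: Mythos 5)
This theorem is imported: the paper states it with a citation to Hamana \cite{hamana07ppdp} and contains no proof of it, so there is no in-paper argument to compare yours against. Judged on its own, your sketch is a faithful reconstruction of the proof Hamana actually gives, which is the higher-order analogue of Zantema's: a one-step lifting lemma (one $\ar$-step on meta-variable-free terms maps to one $\a_{lab(\cR)}$-step on the labelled redex, whose right-hand side already carries the correct labels, followed by finitely many $Decr$-steps relabelling the surrounding context downward), and your order of steps is the right one, since the quasi-model inequality $!^\cM_B(\G)(l\t)\ge_{M_B(\G)}!^\cM_B(\G)(r\t)$ together with monotonicity of $\ff^\cM$ and $\pi_\ff$ makes the old context labels dominate the new ones. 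Two remarks on the parts you flag as delicate. First, your restriction to ``ground'' terms must be read as meta-variable-free rather than variable-free: rewriting happens on terms in arbitrary environments $\G$, which is precisely why $lab(\cR)$ is closed under all $\G\in\bE$, as you correctly observe. Second, the bookkeeping you identify as the main obstacle --- transporting interpretations and labels through binders via $up_\G^{\vB_i}$, and the coherence of the labels of symbols introduced by $\t(Z)$ whether computed inside $l\t$ or inside the labelled rule --- is not handled by ad hoc induction in Hamana's development: it is packaged once and for all in the $\cF$-monoid axioms (in particular the fourth clause of the definition the present paper recalls in its appendix on $\cF$-monoids, which is exactly the compatibility of $\ff^\cM$ with substitution across environment extension) and in the fact that the labelled meta-terms again form an $\cF$-monoid, from which your two substitution lemmas fall out by initiality of $I^\vide$ and freeness of $I^\cZ$. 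So your decomposition into the three ingredients is sound and matches the known proof; completing it amounts to verifying those algebraic facts rather than inventing anything new.
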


\section{Higher-order case}
\label{sec-ho}

In order to apply Hamana's higher-order semantic labelling, we first
need to translate into a structural IDTS not only the rewrite system
$\cR$ but also $\b$ itself.


\smallskip

{\bf Translation to structural IDTS.} Following Example 4.1 in
{}\cite{hamana07ppdp}, the relations $\b$ and $\cR$ can be encoded in
a structural IDTS as follows.

Let the set of {\em IDTS base types} $\bB$ be the set $\cT(\S)$ where
$\S_0=\cB$ is the set of base types, $\S_2=\{Arr\}$ and $\S_n=\vide$
otherwise. A simple type $T$ can then be translated into an IDTS base
type $\ps{T}$ by taking $\ps{T\A U}=Arr(\ps{T},\ps{U})$ and $\ps{T}=T$
if $T\in\cB$. Then, an environment $\G$ can be translated into an IDTS
environment $\ps\G$ by taking $\ps\vide=\vide$ and
$\ps{x:T,\G}=x:\ps{T},\ps\G$. Conversely, let $|T|$ be the simple type
such that $\ps{|T|}=T$.

Let the set of {\em IDTS function symbols} be the set $\ps\cF$ made of
the symbols $\ps\ff:\ps{T_1}\A\ldots\A\ps{T_n}\A\fB$ such that
$\ff:T_1\A\ldots\A T_n\A\fB$, and all the symbols $\l_T^U:(T\A U)\A
Arr(T,U)$ and $@_T^U:Arr(T,U)\A T\A U$ such that $T$ and $U$ are IDTS
base types. Note that only $\l_T^U$ has a second order type.

A simply-typed $\l$-term $t$ such that $\G\th t:T$ can then be
translated into an IDTS term $\ps{t}_\G$ such that
$\ps\G\th\ps{t}_\G:\ps{T}$ as follows:

\begin{itemize}
\item
$\ps{x}_\G=x$,
\item
$\ps{\l x^Tu}_\G=\l_\ps{T}^\ps{U}(\l x\ps{u}_{\G,x:T})$ if $\G,x:T\th
u:U$,
\item
for $\ff:T_1\A\ldots\A T_n\A\fB$ and $U_i=T_{i+1}\A\ldots\A T_n\A\fB$,\\
$\ps{\ff t_1\ldots t_k}_\G=\l_\ps{T_{k+1}}^\ps{U_{k+1}}(\l x_{k+1}\ldots\l_\ps{T_n}^\ps{U_n}(\l x_n\ps\ff(\ps{t_1}_\G,\ldots,\ps{t_k}_\G,x_{k+1},\ldots,x_n))...)$,
\item
$\ps{tu}_\G=@_\ps{U}^\ps{V}(\ps{t}_\G,\ps{u}_\G)$ if $\G\th t:U\A V$.
\end{itemize}

A rewrite rule $l\a r\in\cR$ is then translated into the IDTS rule
$\ps{l}\a\ps{r}$ where the free variables of $l$ are seen as nullary
meta-variables, and $\b$-rewriting is translated into the family of
IDTS rules $\ps\b=\bigcup_{T,U\in\bB}\b_T^U$ where $\b_T^U$ is:

\[@_T^U(\l_T^U(\l xZ(x)),X)\a Z(X)\]

\noindent
where $Z$ (resp. $X$) is a meta-variable of type $T\A U$
(resp. $T$). Note that only $\ps\b$ uses non-nullary meta-variables.

Then, $\ar\cup\ab$ terminates iff $\a_{\ps\cR\cup\ps\b}$ terminates
(Appendix \version{F}{\ref{sec-idts}}).


\smallskip

{\bf Interpretation domain.} We now define the interpretation domain
$M$ for interpreting $\ps\b\cup\ps\cR$. First, we interpret
environments as arrow types:

\begin{itemize}
\item
$M_T(\G)=N_{Arr(\G,T)}$ where:\\
$Arr(\vide,T)=T$ and
$Arr(\G+U,T)=Arr(\G,Arr(U,T))$.
\end{itemize}

As explained at the beginning of Section {}\ref{sec-sbt}, to every
base type $\fB\in\cB$ corresponds a limit ordinal $\w_\fB<\kA$ that is
the number of transfinite iterations of the monotonic function $F_\fB$
that is necessary to build the interpretation of $\fB$.

So, a first idea is to take $N_\fB=\w_\fB$ and the set of functions
from $N_T$ to $N_U$ for $N_{Arr(T,U)}$. But taking all functions
creates some problems. Consider for instance the constructor
$\lim:(\fN\A\fO)\A\fO$. We expect $\lim^\cM(\vide)(f)=sup\{f(n)\mid
n\in N_\fN\}+1$ to be a valid interpretation, but $sup\{f(n)\mid n\in
N_\fN\}+1$ is not in $N_\fO$ for each function $f$. We therefore need
to restrict $N_{Arr(T,U)}$ to the functions that correspond to (are
realized by) some $\l$-term.

Hence, let $N_T=\{x\mid\ex t\in\cT,~t\th_T x\}$ where $\th_T$ is
defined as follows:

\begin{itemize}
\item
$t\th_\fB\ka\in\w_\fB$ if $t\in\I\fB$ and $o_\fB(t)\ge\ka$,
\item
$v\th_{Arr(T,U)}f:N_T\a N_U$ if $v\in\I{|T|\A|U|}$ and $vt\th_U f(x)$
whenever $t\th_Tx$.
\end{itemize}

Then, we can now check that $sup\{f(n)\mid n\in N_\fN\}+1\in
N_\fO$. Indeed, if there are $v$ and $t$ such that
$v\th_{Arr(\fN,\fO)}f$ and $t\th_\fN n$, then $vt\th_\fO f(n)$ and
$\lim(v)\th_\fO sup\{f(n)\mid n\in N_\fN\}+1\in N_\fO$.

The action of $M$ on $\bE$-morphisms is defined as follows. Given
$f:\G\a\D$ with $\G:n\a\bB$ and $\D:p\a\bB$, let $M_T(f):M_T(\G)\a
M_T(\D)$ be the function mapping $x_0\in N_{Arr(\G,T)}$, $x_1\in
N_{\D(1)}$,
\ldots, $x_p\in N_{\D(p)}$ to $x_0(x_{f(1)},\ldots,x_{f(n)})$.
\comment{($x_{f(i)}:\D(f(i))=\G(i)$ by definition of $\bE$-morphisms).}

Finally, the sets $M_B(\G)$ and $N_T$ are ordered as follows:

\begin{itemize}
\item
$x\le_{M_B(\G)}y$ if $x\le_{N_{Arr(\G,B)}}y$ where:
\begin{itemize}
\item
$x\le_{N_\fB}y$ if $x\le y$,
\item
$f\le_{N_{Arr(T,U)}}g$ if $f(x)\le_{N_U}g(x)$ for each $x\in N_T$.
\end{itemize}
\end{itemize}


{\bf Interpretation of variables and function symbols.} As one can
expect, variables are interpreted by projections:
$\io_{\G(i)}(\G)(i)(\vx)=x_i$, $\l_T^U$ by the identity:
$(\l_T^U)^\cM(\G)(f)=f$, and $@_T^U$ by the application:
$(@_T^U)^\cM(\G)(f,x)(\vy)=f(\vy,x(\vy))$.

\comment{
\begin{itemize}
\item
$\io_{\G(i)}(\G)(i)(\vx)=x_i$,
\item
$(\l_T^U)^\cM(\G)(f)=f$,
\item
$(@_T^U)^\cM(\G)(f,x)(\vy)=f(\vy,x(\vy))$.
\end{itemize}
}

One can check that these functions are valid interpretations indeed,
\ie $\io_{\G(i)}(\G)(i)(\vx)\in N_{\G(i)}$ and
$(@_T^U)^\cM(\G)(f,x)(\vy)\in N_U$.

Moreover, we have $(@_T^U)^\cM(\G)(f,x)(\vx)=\mu_U(\G)(f,\vp x)$ where
$p_i=\io_{\G(i)}(\G)(i)$ and $\mu$ is the monoidal product
$\mu_B(\G)(t,u_1\ldots u_n)(\vx)=t(u_1(\vx),\ldots,u_n(\vx))$.

We can then verify that $\ps\b$ is valid if $(M,\mu)$ is an
$\cF$-monoid, and that $(M,\mu)$ is an $\cF$-monoid if, for each $\ff$
and $\G$,
$\ff^\cM(\G)(\vx)(\vy)=\ff^\cM(\vide)(x_1(\vy),\ldots,x_n(\vy))$
(Appendix \version{G}{\ref{sec-beta}}).

\comment{
\begin{lemma}
\label{lem-beta-valid}
If $(M,\mu)$ is an $\cF$-monoid, then $\ps\b$ is valid in $M$.
\end{lemma}

And $M$ is an $\cF$-monoid if the functions $\ff^\cM$ are {\em
parametric} wrt environments:

\begin{lemma}
\label{lem-monoid}
$(M,\mu)$ is an $\cF$-monoid if
$\ff^\cM(\G)(\vx)(\vy)=\ff^\cM(\vide)(x_1(\vy),\ldots,x_n(\vy))$.
\end{lemma}
}

One can see that $(\l_T^U)^\cM$ and $(@_T^U)^\cM$ satisfy this
property. Moreover, for each term $t\in I^\vide_T(\G)$, we have
$!_T^\cM(x_1:T_1...\,x_n:T_n)(t)(\va)=\I{t}\mu$ where $x_i\mu=a_i$ and:

\begin{center}
$\I{x}\mu=\mu(x)$\quad
$\I{@_T^U(v,t)}\mu=\I{v}\mu(\I{t}\mu)$\quad
$\I{\l_T^U(\l xu)}\mu=a\to\I{u}\mu_x^a$\quad
$\I{\ff(\vt)}\mu=\ff^\cM(\vide)(\I\vt\mu)$\quad
$\I{Z(\vt)}\mu=\mu(Z)(\I\vt\mu)$
\end{center}



{\bf Higher-order size algebra.} In the first-order case, the
interpretation of the function symbols $\ff$ such that $\ff^\cA$ is
not the constant function equal to $\infty$ (which includes
constructors) is $\ff^\cM(\va)=\I{\ff^\cA(\valpha)}\mu$ where
$\valpha\mu=\va$. To be able to do the same thing in the higher-order
case, we need the size algebra $\cA$ to be a typed higher-order
algebra interpreted in the sets $N_T$.

Hence, now, we assume that size expressions are simply-typed
$\l$-terms over a typed signature $\S$, and that every function symbol
$\ff:\tf$ is interpreted by $\infty$ or a size expression
$\ff^\cA:\tf$. We then let $\s:\cT\a\o\cA$ be the function that
replaces in a term every symbol $\ff$ by $\ff^\cA$, all the terms
containing $\infty$ being identified. Hence, for each term $t$
containing no symbol $\ff$ such that $\ff^\cA=\infty$, we have
$\I{t}\mu=\I{\s(t)}\mu$. Finally, we define $\lts$ as the relation
such that $a\lts b$ if, for each $\mu$, $\I{a}\mu<_\kA\I{b}\mu$.

For instance, for a strictly-positive constructor $\fc:\vT\A\fB$ with
$T_i=\vU_i\A\fB_i$, we can assume that there is a symbol
$\fc^\cA\in\S$ interpreted by the function
$\fc^\kA(\vx)=sup\{x_i\vy_i\mid i\in\ind(\fc),\vy_i\in
N_\ps{\vU_i}\}+1$. Hence, with Brouwer's ordinals, we have $\s(\lim
f)=\lim^\cA f\gts\s(fn)=fn$.

Thus, using such an higher-order size algebra, we can conclude:

\begin{theorem}
\label{thm-ho}
SB-termination implies termination if constructors are
strictly-positive and the conditions of Theorems {}\ref{thm-fo} and
{}\ref{thm-sbt-plus-plus} are satisfied.
\end{theorem}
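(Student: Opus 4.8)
The plan is to apply Hamana's higher-order semantic labelling (Theorem~\ref{th-hosl}) to the structural IDTS $\ps\cR\cup\ps\b$, which terminates iff $\ar\cup\ab$ does, reusing the presheaf $M$, its $\cF$-monoid structure, and the higher-order size algebra built just above. The whole argument runs parallel to the proof of Theorem~\ref{thm-fo}: the role of the interpretation in $\kA=\bN$ is now played by the sets $N_T$, and the role of $\I{\cdot}\mu$ by the identity $!^\cM_T(\G)(t)(\va)=\I{t}\mu$ (with $x_i\mu=a_i$), which lets every size computation be carried out pointwise.

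First I would fix the $\cF$-algebra. The symbols $\l_T^U$, $@_T^U$ and the variables keep their prescribed interpretations. For every $\ff$ with $\ff^\cA\neq\infty$ --- in particular every strictly-positive constructor, interpreted by $\fc^\kA(\vx)=\sup\{x_i\vy_i\mid i\in\ind(\fc),\vy_i\in N_{\ps{\vU_i}}\}+1$ --- I set $\ff^\cM(\vide)(\va)=\I{\ff^\cA(\valpha)}\mu$ with $\valpha\mu=\va$ and extend it parametrically by $\ff^\cM(\G)(\vx)(\vy)=\ff^\cM(\vide)(x_1(\vy),\ldots,x_n(\vy))$. For every defined $\ff$ with $\ff^\cA=\infty$ I define $\ff^\cM(\vide)$ exactly as in the proof of Theorem~\ref{thm-fo}, by induction on the well-founded order $(\ff,\vx)>^\kA(\fg,\vy)$ (namely $\ff\gtf\fg$, or $\ff\eqf\fg$ and $\z_\ff^\kA(\vx)>_\ff^\kA\z_\ff^\kA(\vy)$), taking the supremum of $\{\I{r}\mu\mid\ff\vl\a r\in\cR,\ \I\vl\mu\le\va\}$, and again extend parametrically. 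Well-definedness reuses the two bullets of that proof verbatim: the subterm/decrease step now draws on the hypotheses imported from Theorem~\ref{thm-sbt-plus-plus} (monotonicity and strict extensivity of the $\fg^\cA$ occurring in $\vl$, and $\fg\ltf\ff$ or the $\z^\cA$-decrease when $\fg^\cA=\infty$), and the finiteness step on strict extensivity of constructors together with finite branching and finiteness of the constructor set of each type. Since every $\ff^\cM$ is parametric by construction, $(M,\mu)$ is an $\cF$-monoid, so $\ps\b$ is valid.

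Next I would discharge the three hypotheses of Theorem~\ref{th-hosl}. For the quasi-model condition, the labelled $\ps\b$-rules hold with equality by validity of $\ps\b$; for a rule $\ps l\a\ps r$ coming from $\ff\vl\a r\in\cR$ I pass through $!^\cM(t)(\va)=\I{t}\mu$ and $\I{t}\mu=\I{\s(t)}\mu$ (legitimate since the terms involved are $\infty$-free, as in the first-order proof): when $\ff^\cA\neq\infty$, argument decreasingness $\G\thc^i r:\fB^a$ with $a\le_\o\cA\ff^\cA(\va)$ gives $\s(r)\les\s(l)$, hence $\I{\ps l}\mu\ge\I{\ps r}\mu$ pointwise, and when $\ff^\cA=\infty$ the inequality is immediate from the defining supremum. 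Monotonicity of $\ff^\cM$ is routine for $\l_T^U$ and $@_T^U$ (identity and application on the $N_T$), follows for constructors from the $\sup\{\ldots\}+1$ formula, from monotonicity of $\ff^\cA$ (a hypothesis of Theorem~\ref{thm-fo}) when $\ff^\cA\neq\infty$, and from the supremum definition when $\ff^\cA=\infty$. Taking the labelling data as first-order --- singleton labels for constructors, $\l_T^U$, $@_T^U$, and $S^\ff=D_\ff^\kA$ with $\pi_\ff=\z_\ff^\kA$ for defined symbols --- condition~2 is just the assumed monotonicity of $\z_\ff^\cA$/$\z_\ff^\kA$.

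The main obstacle is the third hypothesis: termination of $\a_{lab(\ps\cR\cup\ps\b)\cup Decr}$. In the first-order proof this was pure precedence termination, but here the labelled system also contains the encoded $\b$-rule, so first-order precedence termination no longer applies directly. My plan is to exhibit a well-founded precedence in which $\l_T^U$ and $@_T^U$ are minimal and $\fg_a<\ff_b$ iff $\fg\ltf\ff$, or $\fg\eqf\ff$ and $a<_\ff^\kA b$: each labelled $\ps\cR$-rule then has every right-hand-side symbol below its head (constructors and smaller defined symbols because of $\ltf$, recursive occurrences because the label $\z_\ff^\kA$ strictly decreases exactly as in Theorem~\ref{thm-fo}), $Decr$ decreases the label, and the $\b$-rule $@_T^U(\l_T^U(\l x Z(x)),X)\a Z(X)$ imposes no constraint since its right-hand side carries no function symbol. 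I would then conclude by a higher-order precedence-termination criterion that also orients $\ps\b$ (General Schema, or a higher-order recursive path ordering). Turning this combination of precedence termination with $\b$ into a rigorous termination proof --- rather than the self-contained path-ordering argument that sufficed in the first-order case --- is where the real work lies.
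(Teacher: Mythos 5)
Your first two stages (the interpretation of symbols with parametric extension, the quasi-model check through $!^\cM_T(\G)(t)(\va)=\I{t}\mu$, leaving $@_T^U$ and $\l_T^U$ unlabelled) follow the paper's proof. But your labelling data already fails to type-check in Hamana's framework: $\pi_\ff(\G)$ must be a function on $\prod_{i}M_{B_i}(\G+\vB_i)$, whose elements are realizable \emph{functions} in $N_{Arr(\G+\vB_i,B_i)}$, not ordinals, so $\z_\ff^\kA:\kA^{|\valpha_\ff|}\a D_\ff^\kA$ cannot be applied to them except at base types over the empty environment. The paper instead takes the label to be the semantic data itself: $S^\ff=\coprod_\G\prod_{i=1}^nM_{B_i}(\G)$ with $\pi_\ff(\G)(\vx)=(\G,\vx)$, and this choice is not cosmetic --- it is what makes the treatment of $\b$ possible.

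The decisive error is your claim that the $\b$-rule ``imposes no constraint since its right-hand side carries no function symbol.'' In Theorem {}\ref{th-hosl}, $lab(\ps\b)$ is the set of labelled \emph{instances} $lab_B^\vide(\G)(l\t)\a lab_B^\vide(\G)(r\t)$ over all assignments $\t:\cZ\a I^\vide$; for $\t(Z)=\l x\,u$ and $\t(X)=t$ the right-hand side is $lab_U(\G)(u_x^t)$, which is full of labelled symbols, and these labels \emph{differ} from those carried by the same occurrences on the left: inside $lab_U(\G+T)(u)$ a symbol $\ff$ is labelled by $(\G+T+\D,\,!^\cM_\vB(\G+T+\D)(\vv))$, whereas in $lab_U(\G)(u_x^t)$ it is labelled by $(\G+\D,\,!^\cM_\vB(\G+\D)(\vv_x^t))$. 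So one must prove termination of infinitely many $\b$-instances whose labels change across the rule, and no precedence making $@_T^U$, $\l_T^U$ minimal addresses this. The paper's proof resolves it by introducing two relations on labels, $>_\ff^\cR$ (decrease along rewrite rules) and $>_\ff^\b$ (label change along substitution), proving well-foundedness of ${>_\ff}={>_\ff^\cR\cup>_\ff^\b}$ via the commutation lemma of Doornbos and von Karger {}\cite{dornbos98igpl} (using $>_\ff^\cR\circ>_\ff^\b\ \sle\ >_\ff^\cR\cup>_\ff^\b\circ>_\ff^\cR$), and then observing the key inclusion $\a_{lab(\ps\b)}\ \sle\ \a_{Decr}^*\a_{\ps\b}$, which lets $lab(\ps\b)$ be replaced by the unlabelled $\ps\b$; after erasing the $\l_T^U$ symbols one lands in a $\b$-IDTS where termination of $\ps\b\cup|lab(\ps\cR)\cup Decr|$ follows from the General Schema of {}\cite{blanqui00rta} applied to $|lab(\ps\cR)\cup Decr|$, using a precedence of the kind you describe. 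Your proposal explicitly defers ``the real work'' to this last step, but without the $(\G,\vx)$-labels, the $>_\ff^\b$ relation and the union lemma, that work cannot be carried out: the gap is not a routine strengthening of precedence termination, it is the central construction of the proof.
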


\begin{proof}
The proof is similar to the first-order case (Theorem
{}\ref{thm-fo}). We only point out the main differences.

We first check that $\cM$ is a quasi-model. The case of $\ps\b$ is
detailed in Appendix \version{G}{\ref{sec-beta}}. For $\ps\cR$, we use
the facts that $!^\cM_B(\G)(l\t)\le_{M_B(\G)}!^\cM_B(\G)(r\t)$ if
$!^\cM_B(\G)(l\t)(\va)\le_{M_B(\vide)}!^\cM_B(\G)(r\t)(\va)$ for each
$\va$, and that $!^\cM_B(\G)(l\t)(\va)=\I{l}\t\mu$ where $x_i\mu=a_i$.

We do not label applications and abstractions. And for a defined
symbol $\ff:\vB\A B$, we take
$S^\ff=\coprod_\G\prod_{i=1}^nM_{B_i}(\G)$ and
$\pi_\ff(\G)(\vx)=(\G,\vx)$.

We now define a well-founded relation on $S^\ff$ that we will use for
proving some higher-order version of precedence-termination. For
dealing with $lab(\ps\cR)$, let $(\G,\vx)>_\ff^\cR(\D,\vy)$ if
$\D=\G+\G'$ and, for each $\vz\vz'$, $\z_\ff(\ldots
x_i(\vz)\ldots)>_\ff^\kA\z_\ff(\ldots y_i(\vz\vz')\ldots)$. For
dealing with $lab(\ps\b)$, let $(\G,\vx)>_\ff^\b(\D,\vy)$ if $\G=\D+T$
and there is $e$ such that, for each $i$ and $\vz$,
$x_i(\vz,e(\vz))=y_i(\vz)$. Since $>_\ff^\cR\circ>_\ff^\b$ is included
in $>_\ff^\cR\cup>_\ff^\b\circ>_\ff^\cR$, the relation
${>_\ff}={>_\ff^\cR\cup >_\ff^\b}$ is well-founded
{}\cite{dornbos98igpl}.

One can easily check that the functions $\pi_\ff$ and $\ff^\cM$ are
monotonic.

We are now left to prove that $\a_{lab(\ps\b)\cup lab(\ps\cR)\cup
Decr}$ terminates. First, remark that $\a_{lab(\ps\b)}$ is included in
$\a_{Decr}^*\a_\ps\b$. Indeed, given $@_T^U(\l_T^U(\l
xlab_U(\G+T)(u)),lab_T(\G)(t))\a lab_U(\G)(u_x^t)\in lab(\ps\b)$, a
symbol $\ff$ occuring in $u$ is labelled in $lab_U(\G+T)(u)$ by
something like $(\G+T+\D,!^\cM_\vB(\G+T+\D)(\vv))$, and by something
like $(\G+\D,!^\cM_\vB(\G+\D)(\vv_x^t))$ in $lab_U(\G)(u_x^t)$. Hence,
the relation $\a_{lab(\ps\b)\cup lab(\ps\cR)\cup Decr}$ terminates if
$\a_{\ps\b\cup lab(\ps\cR)\cup Decr}$ terminates.

By translating back IDTS types to simple types and removing the
symbols $\l_T^U$ (function $|\u~|$), we get a {\em $\b$-IDTS}
{}\cite{blanqui00rta} such that $\a_{\ps\b\cup lab(\ps\cR)\cup Decr}$
terminates if $\a_{|\ps\b\cup lab(\ps\cR)\cup Decr|}$ terminates
(Appendix \version{F}{\ref{sec-idts}}). Moreover, after
{}\cite{blanqui00rta}, $\a_{|\ps\b\cup lab(\ps\cR)\cup Decr|}$
terminates if $|lab(\ps\cR)\cup Decr|$ satisfies the {\em General
  Schema} (we do not need the results on {\em solid} IDTSs
{}\cite{hamana07ppdp}). This can be easily checked by using the
precedence $>$ on $\o\cF$ such that $\ff_a>\fg_b$ if $\ff\gtf\fg$ or
$\ff\eqf\fg$ and $a>_\ff b$.
\end{proof}


{\bf Conclusion.} By studying the relationship between sized-types
based termination and semantic labelling, we arrived at a new way to
prove the correctness of SBT that enabled us to extend it to
non-constructor systems, \ie systems with matching on defined symbols
(\eg associative symbols, Appendix \version{D}{\ref{sec-non-constr}}).
This work can be carried on in various directions by considering:
richer type structures with polymorphic or dependent types,
non-strictly positive constructors,
or the inference of size annotations to automate SBT.

{\bf Acknowledgments.} The authors want to thank very much Colin Riba
and Andreas Abel for their useful remarks on a previous version of
this paper. This work was partly supported by the
Bayerisch-Franz\"osisches Hochschulzentrum.


\newpage
\appendix

\section{Pattern condition}
\label{sec-pat-cond}

Example of pattern not satisfying the pattern condition:

Consider the (higher-order) base type $\fB$ whose constructors are
$\fb:\fB^\alpha\A\fB^\alpha\A\fB^{\fs\alpha}$, $\fc:\fB^\alpha\A\fB^\alpha$,
$\fd:(\fN\A\fB^\alpha)\A\fB^{\fs\alpha}$ and $\fe:\fB^\infty$.

Because of the constructor $\fd$, $\I\fB$ has elements of size greater
then $\w$. For instance, $\fd(\fj)$ where $\fj:\fN\A\fB$ is defined by
the rules $\fj\mf{0}\a\fe$ and $\fj(\fs x)\a\fc(\fj x)$, is of size
$\w+1$.

Consider now the pattern $p=\fb x(\fc(\fc y))$ for a function
$\ff:\fB^\alpha\A T$.

Since the types of the arguments of a constructor use the same size
variable $\alpha$, for $p$ to be well typed, we need to take
$\G=x:\fB^{\fs(\fs\b)},y:\fB^\b$ and $a=\fs(\fs(\fs\b))$.

Hence, assuming that $p\in\I\fB$, there must be an ordinal $\kb=\b\nu$
such that $o(x)\le\kb+2$, $o(y)\le\kb$ and $\kb+3\le
o(p)=max\{o(x)+1,o(y)+3\}$. Unfortunately, if we take an element $x$
of size $o(x)=\w+1$ and an element $y$ of size $o(y)=0$, then the
previous set of constraints, which reduces to $\w+1\le\kb+2$ and
$\kb+3\le\w+2$, is unsatisfiable. Indeed, for being satisfiable, $\w$
should be a successor ordinal which is not the case.


\section{Type inference}
\label{sec-typ-inf}

Let $\cX(\G)=\bigcup_{x\in\dom(\G)}\cX(x\G)$ be the set of size
variables occuring in the types of the variables of $\dom(\G)$.

\begin{figure}
\vsp[-5mm]
\centering
\caption{Type inference system\label{fig-thi-mgu}}
\vsp[3mm]
$\cfrac{(x,T)\in\G}{\G\th^ix:T}$\quad\quad
$\cfrac{\r:\cX(\tf^\cA)\a\cX\moins\cX(\G)\mbox{ renaming}}
{\G\th^i\ff:\tf^\cA\r}$\quad\quad
$\cfrac{\G,x:T\th^iu:U}{\G\th^i\lx^Tu:T\A U}$\\[3mm]
$\cfrac{\begin{array}{c}
\G\th^it:U\A V\quad \G\th^iu:U'\\
\r:\cX(U')\moins\cX(\G)\a\cX\moins(\cX(U)\cup\cX(\G))\mbox{ renaming}\\
\vphi=mgu(U,U'\r)\mbox{ with }\cX(\G)\mbox{ seen as constants}
\end{array}}{\G\th^i tu:V\vphi}$
\vsp[-6mm]
\end{figure}

\begin{lemma}
\label{lem-typ-inf}
The type inference relation of Figure {}\ref{fig-thi-mgu} is correct
and complete wrt the typing relation of Figure {}\ref{fig-th} with the
subtyping rules removed:
\begin{itemize}
\item
If $\G\th^it:T$ then $\G\th t:T$.
\item
If $\G\th t:T$ then there is $T'$ and $\vphi$ such that $\G\th^it:T'$
and $T'\vphi=T$.
\end{itemize}
\end{lemma}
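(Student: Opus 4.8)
The plan is to treat this as a standard principal-typing result, where the ``type variables'' are the size variables and unification takes place in the size algebra $\cA$. Both inclusions go by induction on the derivation, and both rest on two auxiliary facts. The first is a \emph{stability-by-substitution} lemma for the subtyping-free system of Figure~\ref{fig-th}: if $\G\th t:T$ then $\G\s\th t:T\s$ for every size substitution $\s$. This is proved by a routine induction, the only interesting case being the symbol rule, where an instance $\tf^\cA\vphi$ is replaced by $\tf^\cA(\vphi\s)$; here one uses that bound-variable annotations are $\infty$-annotated (the remark $T=T^\infty$), so $\s$ leaves them fixed and the term is unchanged. The second is the usual behaviour of most general unifiers over $\cA$ with $\cX(\G)$ frozen as constants: two annotated types with a common instance are unifiable, and any unifier factors through the mgu. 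I also use that a size substitution never alters the simple-type skeleton $|\cdot|$, so it sends arrow types to arrow types of the same outermost shape.

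For \textbf{correctness} I would induct on the derivation of $\G\th^it:T$. The variable and abstraction cases follow immediately from the matching rules of Figure~\ref{fig-th}. For the symbol rule, the renaming $\r$ is in particular a size substitution $\cX(\tf^\cA)\a\cA$, so $\G\th\ff:\tf^\cA\r$ holds. The only real work is the application rule: from $\G\th^it:U\A V$ and $\G\th^iu:U'$ the induction hypothesis gives $\G\th t:U\A V$ and $\G\th u:U'$, hence $\G\th u:U'\r$. Applying $\vphi=mgu(U,U'\r)$ through the stability lemma, and using that $\vphi$ fixes $\cX(\G)$ so that $\G\vphi=\G$, yields $\G\th t:U\vphi\A V\vphi$ and $\G\th u:U'\r\vphi$; since $U\vphi=U'\r\vphi$ the two judgements share the domain type, and the application rule of Figure~\ref{fig-th} delivers $\G\th tu:V\vphi$.

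For \textbf{completeness} I would induct on the derivation of $\G\th t:T$, proving the slightly stronger statement that the instantiating substitution can be chosen to fix every variable of $\cX(\G)$ (the inference of Figure~\ref{fig-thi-mgu} treats these as constants). The variable case uses the identity, the symbol case uses the renaming inverted and composed with the declarative $\vphi_0$, and the abstraction case propagates the inductive substitution, again using that the $\l$-annotation carries no size variable. The crux is the application case. From $\G\th t:U\A V$ and $\G\th u:U$ the induction hypothesis gives $\G\th^it:W_1$ with $W_1\psi_1=U\A V$ and $\G\th^iu:U_2$ with $U_2\psi_2=U$, both $\psi_i$ fixing $\cX(\G)$; since substitution preserves the arrow skeleton, $W_1=U_1\A V_1$ with $U_1\psi_1=U$ and $V_1\psi_1=V$. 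Using $\r$ to keep the fresh size variables of $U_2$ disjoint from those of $t$'s inferred type, $\psi_1$ and the renamed $\psi_2$ agree on $\cX(\G)$ (both identities) and act on disjoint variables elsewhere, so they merge into one common unifier of $U_1$ and $U_2\r$. Hence $\vphi=mgu(U_1,U_2\r)$ exists, the application rule of Figure~\ref{fig-thi-mgu} gives $\G\th^itu:V_1\vphi$, and by generality of the mgu the merged unifier factors as $\vphi\t$; then $V_1\vphi\t=V_1\psi_1=V$, so $T'=V_1\vphi$ and $\t$ witness the claim, with $\t$ again fixing $\cX(\G)$.

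The main obstacle is exactly this application step of the completeness direction: one must manage the freshness of the size variables introduced by $\r$, keep $\cX(\G)$ frozen so that the inferred environment is never altered (which is why I strengthen the induction hypothesis), and invoke the \emph{generality} of the mgu to factor the declarative witness through the inferred $\vphi$. Everything else—the soundness direction and all the non-application cases—is bookkeeping once the substitution-stability lemma is in place.
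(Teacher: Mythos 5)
Your proof is correct and follows essentially the same route as the paper's: correctness by induction on the inference derivation using stability by substitution, and completeness by induction on the declarative derivation, with the application case handled by observing that substitution preserves the arrow skeleton, merging the two inductively obtained substitutions into a single unifier, and factoring it through the mgu. The paper's version is much terser—it leaves the renaming $\r$, the freshness bookkeeping, and the frozen-$\cX(\G)$ strengthening of the induction hypothesis implicit—so your extra care fills in genuine details but does not constitute a different method.
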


\begin{proof}
\begin{itemize}
\item Correctness:
By induction on $\G\th^it:T$, using stability by substitution.
\item Completeness:
By induction on $\G\th t:T$. We only detail the application case. By
induction hypothesis, there is $T'$ and $\vphi$, and $U'$ and $\psi$
such that $\G\th^it:T'$, $T'\vphi=U\A V$, $\G\th^iu:U'$ and
$U'\psi=U$. It follows that $T'$ is of the form $A\A B$ and $U=A\vphi$
and $B\vphi=V$. Hence, there is $\t=mgu(A,U')$ and $\vphi'$ such that
$\vphi=\t\vphi'$. Therefore, $\G\th^itu:B\t$ and there is $\vphi'$
such that $B\t\vphi'=V$.\qed
\end{itemize}
\end{proof}


\section{Proof of Theorem {}\ref{thm-sbt-plus}}
\label{sec-sbt-plus}

\begin{proof}
We prove that, for all $\t$, if $\vp\t\in\I\vP$ and
$\vl\t\in\I{\vec\fB}$ then there is $\nu$ such that, for all
$(x,T)\in\G$, $x\t\in\I{T}^\nu$ and $\va\nu=o_{\vec\fB}(\vl\t)$.

Let $T$ be a type in which $\pos(\alpha,T)\sle\pos^+(T)$. Then,
$\I{T}_\alpha^\ka$ is a monotonic function on $\ka$
{}\cite{blanqui04rta}. Given $t\in\I{T}_\alpha^\kA$, let $o_{\l\alpha
T}(t)$ be the smallest ordinal $\ka$ such that
$t\in\I{T}_\alpha^\ka$. Note that $o_{\l\alpha\fB^\alpha}=o_\fB$.

Let now $(x,T)\in\G$. Since we have an inductive structure,
$\pos(x,T)\sle\pos^+(T)$. One can easily check that $x\t\in\I{T}^\mu$
where $\mu$ is the constant valuation equal to $\kA$. We can thus
define $x\nu=o_{\l xT}(x\t)$ and we have $x\t\in\I{T}^\nu$.

We now prove that $a_i\nu=\s(l_i)\nu=o_{\vec\fB}(l_i\t)$ by induction
on $l_i$. If $l_i=x$ and $(x,T)\in\G$ then $\s(l_i)=x$, $T=\fB_i^x$
and $x\nu=o_{\l x\fB_i^x}(x\t)=o_{\fB_i}(l_i\t)$. Assume now that
$l_i=\fc\vt$ with $\fc:\vT\A\fC$. If $\fC$ is non-recursive, then
$\s(l_i)=0$ and $\s(l_i)\nu=0=o_{\fB_i}(l_i\t)$. Otherwise,
$\s(l_i)=\fs(\max(\s(t_{i_1}),\ldots,\s(t_{i_k})))$. If $T_{i_j}$ is a
base type then, by induction hypothesis,
$\s(t_{i_j})\nu=o_{T_{i_j}}(t_{i_j}\t)$. Otherwise, there is
$(x,T)\in\G$ such that $t_{i_j}=x$ and $\s(t_{i_j})\nu=x\nu=o_{\l
xT}(x\t)$. Since $o_\fC(l_i\t)=sup\{o_{\l\alpha\vT}(\vt\t)\}+1$, we
have $\s(l_i)\nu=o_{\fB_i}(l_i\t)$.\qed
\end{proof}


\section{Example of non-constructor system}
\label{sec-non-constr}

Assuming that $\fA$ is the $\A$-type constructor, then the expression
$\fF nuv$ defined below represents the set of $n$-ary functions from
$u$ to $v$.

\begin{center}
\begin{tabular}{cc}
\begin{rew}
+\mf{0}y & y\\
+(\fs x)y & \fs(+xy)\\
+(\fs x)y & +x(\fs y)\\
+(+xy)z & +x(+yz)\\
\end{rew}
&\quad
\begin{rew}
\fF\mf{0}uv & v\\
\fF(\fs x)uv & \fA u(\fF xuv)\\
\fF(+xy)uv & \fF xu(\fF yuv)\\
\end{rew}
\end{tabular}
\end{center}

Take $+^\cA(x,y)=\z_+(x,y)=a=2x+y+1$, $\fF^\cA=\infty$ and
$\z_\fF(x,u,v)=x$. The interpretation of $\fF^\cM$ is well-defined
since $x<a$ and $y<a$. The labelled system that we obtain (where
$b=2y+z+1$) is precedence-terminating:

\begin{center}
\begin{tabular}{cc}
\begin{rew}
+_{y+1}\mf{0}y & y\\
+_{a+2}(\fs x)y & \fs(+_axy)\\
+_{a+2}(\fs x)y & +_{a+1}x(\fs y)\\
+_{2a+z+1}(+_axy)z & +_{2x+b+1}x(+_byz)\\
\end{rew}
&
\begin{rew}
\fF_{0}\mf{0}uv & v\\
\fF_{x+1}(\fs x)uv & \fA u(\fF_xxuv)\\
\fF_a(+_axy)uv & \fF_xxu(\fF_yyuv)\\
\end{rew}
\end{tabular}
\end{center}


\section{$\cF$-monoids}
\label{sec-monoid}

To interpret (higher-order) substitutions, a presheaf $M$ must be an
{\em $\cF$-monoid}, \ie a monoid $(M,\mu:M^2\a M)$ compatible with the
structure of $\cF$-algebra:

\begin{itemize}
\item
$\mu_B(\G)(\io_B(\D)(i),\vu)=u_i$;
\item
$\mu_B(\G)(t,\io_{\D(1)}(\G)(1)\ldots\io_{\D(p)}(\G)(p))=t$;
\item
for $t\in M_B(\T)$, $u_i\in M_{\T(i)}(\D)$ and $v_i\in M_{\D(i)}(\G)$,\\
$\mu_B(\G)(\mu_B(\D)(t,\vu),\vv)=\mu_B(\G)(t,\mu_{\T(1)}(\G)(u_1,\vv)\ldots\mu_{\T(p)}(\G)(u_p,\vv))$;
\item
for $\ff:(\vB_1\A B_1)\A\ldots\A(\vB_n\A B_n)\A B$ and
$\G_i=\G+\vB_i$,\\ $\mu_B(\G)(\ff^\cM(\D)(\vt),\vu)
=\ff^\cM(\G)(\mu_{B_1}(\G_1)(t_1,\vv_1),\ldots,\mu_{B_n}(\G_n)(t_n,\vv_n))$\\
where $v_{i,j}=up_\G^{\vB_i}(u_j)$ if $j<|\D|$, and
$v_{i,j}=|\G|+j-|\D|$ otherwise.
\end{itemize}

In the category of $\cF$-monoids, the presheaf of meta-terms $I^\cZ$
equipped with the product $\mu_B(\G)(t,\vu)=t\{x_1\to
u_1,\ldots,x_n\to u_n\}$ (simultaneous substitution) is free. Hence,
given an $\cF$-monoid $M$, any valuation $\phi:\cZ\a M$ can be
uniquely extended into an $\cF$-monoid morphism $\phi^*:I^\cZ\a M$
such that:

\begin{itemize}
\item
$\phi^*_B(\G)(x)=\io_B(\G)(x)$;
\item
for $Z:\vB\A B$,\\
$\phi^*_B(\G)(Z(t_1,\ldots,t_n))
=\mu_B(\G)(\phi_B(\vB)(Z),\phi^*_{B_i}(\G)(t_1)\ldots\phi^*_{B_n}(\G)(t_n))$;
\item
for $\ff:(\vB_1\A B_1)\A\ldots\A(\vB_n\A B_n)\A B$ and
$\G_i=\G,\vx_i:\vB_i$,\\
$\phi^*_B(\G)(\ff(\l\vx_1t_1,\ldots,\l\vx_nt_n))
=(\ff^\cM)_B(\G)(\phi^*_{B_1}(\G_1)(t_1),\ldots,\phi^*_{B_n}(\G_n)(t_n))$.
\end{itemize}

Given a labelled term $t$, let $|t|$ be the term obtained after
removing all labels. The presheaf of labelled meta-terms $\o{I}^\cZ$
has a structure of $\cF$-monoid for each valuation $\t:\cZ\a I^\vide$
by taking:

\begin{itemize}
\item
$\mu^\t_B(\G)(i,\vu)=u_i$;
\item
for $Z:\vB\A B$,
$\mu^\t_B(\G)(Z(t_1,\ldots,t_n),\vu)=
Z(\mu^\t_{B_1}(\G)(t_1),\ldots,\mu^\t_{B_n}(\G)(t_n))$;
\item
for $\ff:(\vB_1\A B_1)\A\ldots\A(\vB_n\A B_n)\A B$, $\G_i=\G+\vB_i$
and $u_i\in\o{I}^{\t,\cZ}_{\D(i)}(\G)$,\\
$\mu^\t_B(\G)(\ff_a(\l\vx_1t_1,\ldots,\l\vx_nt_n),\vu)
=\ff_b(\mu^\t_{B_1}(\G_1)(t_1,\vv_1),\ldots,\mu^\t_{B_n}(\G_n)(t_n,\vv_n))$\\
where $b=\pi^\ff_B(\G)(!^\cM_{B_1}(\G_1)(|t_1|\t),\ldots,
!^\cM_{B_n}(\G_n)(|t_n|\t))$,\\ $v_{i,j}=up_\G^{\vB_i}(u_j)$ if
$j<|\D|$, and $v_{i,j}=|\G|+j-|\D|$ otherwise.
\end{itemize}


\section{Translation to IDTS and $\b$-IDTS}
\label{sec-idts}

For the translation $\ps~$ from $\l$-terms to second-order IDTS terms,
we have the following properties:

\begin{lemma}
\begin{itemize}
\item
For all $t$ and $\t$, $\ps{t\t}=\ps{t}\ps\t$.
\item
If $t\a_{\b\cup\cR}u$ then $\ps{t}\a_{\ps\b\cup\ps\cR}\ps{u}$.
\end{itemize}
\end{lemma}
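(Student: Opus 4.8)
\textbf{The plan} is to establish the first item and then bootstrap the second from it. For $\ps{t\t}=\ps{t}\ps\t$, where $\ps\t$ is the assignment sending each free variable $x$ of $t$ (read as a nullary meta-variable) to $\ps{\t(x)}$, I would argue by structural induction on $t$, clause by clause. The variable case is immediate: $x\ps\t=\ps\t(x)=\ps{\t(x)}$ when $x\in\dom(\t)$, and $x$ on both sides otherwise. The abstraction case $\l x^Tu$ picks $x$ fresh for $\t$ and uses that applying an assignment commutes with the IDTS symbol $\l_T^U$ and with the binder $\l x$, together with the induction hypothesis on $u$.

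The real content is in the two application clauses, which the translation disambiguates by the head of the spine. For a function-symbol head $\ff t_1\ldots t_k$, substitution keeps the head $\ff$ and acts pointwise on the arguments, so choosing the $\eta$-expansion variables $x_{k+1},\ldots,x_n$ fresh for $\t$ lets one commute $\ps\t$ past $\ps\ff$ and the fresh binders and finish by induction on the $t_i$. \emph{The main obstacle} is the case of an applied variable $x\,t_1\ldots t_k$: its translation is an $@$-spine headed by $x$, but $(x\,t_1\ldots t_k)\t$ acquires a \emph{function-symbol} head as soon as $\t(x)$ is headed by some $\fg$, so the two sides are produced by different clauses. They agree on the nose when $\t(x)$ is variable- or abstraction-headed; otherwise they differ only by the $\ps\b$-redexes created where the already-$\eta$-expanded $\ps{\t(x)}$ is spliced under the $@$'s, so the identity is to be read up to $\a_{\ps\b}$ (equivalently $\ps{t}\ps\t\a^*_{\ps\b}\ps{t\t}$), which is exactly the form needed afterwards.

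For the second item I would treat a top-level redex first and then close under contexts. A $\b$-step $(\l x^Tv)w\ab v\{x\to w\}$ translates to $@_T^U(\l_T^U(\l x\,\ps{v}),\ps{w})$, a literal instance of the rule $\b_T^U$ with $Z\to\l x\,\ps{v}$ and $X\to\ps{w}$, rewriting to $\ps{v}\{x\to\ps{w}\}$, which the first item (for the one-point substitution $\{x\to w\}$) identifies with $\ps{v\{x\to w\}}$. An $\cR$-step $l\s\ar r\s$ translates, by the first item, to $\ps{l}\ps\s\a_{\ps\cR}\ps{r}\ps\s$, since $\ps{l}\a\ps{r}$ is by construction a rule of $\ps\cR$ (the free variables of $l$ being read as nullary meta-variables). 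Context closure then follows by induction on the rewrite position: every clause sends an immediate subterm to a sub-meta-term (an argument of $\ps\ff$, the body under $\l_T^U(\l x\,\cdot)$, or an argument of $@_T^U$), so a one-hole context translates to a one-hole context and the step lifts. The only caveat, inherited from the obstacle above, is that when the contracted argument introduces a head mismatch the $\b$ case delivers $\ps{t}\a^+_{\ps\b\cup\ps\cR}\ps{u}$ rather than a single step, which still yields the intended ``terminates iff terminates'' consequence.
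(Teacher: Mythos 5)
The paper never proves this lemma: it is stated bare in Appendix F, so there is no official proof to compare against, and your argument must be judged on its own. On the whole it is sound, and its central observation is correct and genuinely valuable: with the translation as written (a function-symbol-headed spine $\ff t_1\ldots t_k$ is translated by the merging, $\eta$-expanding clause), the first item is \emph{not} an identity. Your applied-variable analysis pins down exactly when it fails: for $t=xy$ and $\t=\{x\to\fs\}$ one gets $\ps{t}\ps\t=@_\fN^\fN(\l_\fN^\fN(\l x_1\,\ps\fs(x_1)),y)$ while $\ps{t\t}=\ps\fs(y)$, so only $\ps{t}\ps\t\a_{\ps\b}^*\ps{t\t}$ holds, and correspondingly item 2 only holds as $\ps{t}\a^+_{\ps\b\cup\ps\cR}\ps{u}$. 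As you note, this weakening is harmless for the only use the paper makes of the lemma, namely transporting an infinite $\ab\cup\ar$-derivation to an infinite $\a_{\ps\b\cup\ps\cR}$-derivation.

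Two places in your item-2 argument need tightening. First, in the $\cR$-case the weakened item 1 points the wrong way: to fire a $\ps\cR$-step inside $\ps{t}$ you need $\ps{l\s}$ to \emph{be} the instance $\ps{l}\ps\s$, not merely to be reachable from it by $\a_{\ps\b}^*$. This exact equality does hold, but only because the paper's left-hand sides $\ff\vp\vl$ are built from variables and constructor patterns, hence contain no applied variables, which is precisely your ``agree on the nose'' case; this should be said explicitly, since for genuinely higher-order left-hand sides the step would fail. Second, the head-mismatch phenomenon is not confined to the substitution inside the $\b$-case: it also infects context closure, namely when the contracted subterm heads an application spine and its contractum becomes function-symbol-headed. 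For example $((\l x^{\fN\A\fN}x)\fs)y\ab\fs y$, yet $\ps{((\l x^{\fN\A\fN}x)\fs)y}=@(@(\l(\l x\,x),\l(\l x_1\,\ps\fs(x_1))),y)$ needs two $\ps\b$-steps to reach $\ps{\fs y}=\ps\fs(y)$; so ``a one-hole context translates to a one-hole context and the step lifts'' is not literally true, and all of item 2 must be run with the same up-to-$\a_{\ps\b}$ discipline under which it does go through. ($\cR$-redexes are immune to this: both sides of a rule have base type, so an $\cR$-redex never heads a spine and the surrounding context translates unchanged.)
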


We now introduce a translation from a {\em structural} IDTS $I$ having
base types in $\bB$ and some symbols $\l_T^U:(T\A U)\A Arr(T,U)$ for
all $T,U\in\bB$, to a {\em non-structural} IDTS $J$ having base types
in $\cB$ and no symbol $\l_T^U:(T\A U)\A Arr(T,U)$. The symbols of $J$
are all the symbols symbols $|\ff|:|T_1|\A\ldots\A|T_n|\A\fB$ such
that $\ff:T_1\A\ldots\A T_n\A\fB$ is a symbol of $I$ distinct from
some $\l_T^U$. A meta-term in $I^\cZ_T(\G)$ is then translated into a
meta-term in $|I|^\cZ_{|T|}(|\G|)$ as follows:

\begin{itemize}
\item
$|x|=x$,
\item
$|\ff(t_1,\ldots,t_n)|=|\ff|(|t_1|,\ldots,|t_n|)$,
\item
$\l_T^U(\l xu)=\l x|u|$,
\item
$|Z(t_1,\ldots,t_n)|=Z(|t_1|,\ldots,|t_n|)$.
\end{itemize}

Given a set $\cS$ of rules in $I$, let $|\cS|$ be the set of rules
$|l|\a|r|$ in $|I|$ such that $l\a r\in\cS$.

\begin{lemma}
\begin{itemize}
\item
For all $t$ and $\t$, $|t\t|=|t||\t|$.
\item
If $t\a_\cS u$ then $|t|\a_{|\cS|}|u|$.
\end{itemize}
\end{lemma}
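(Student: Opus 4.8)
The plan is to prove the two items in order, since the rewrite simulation of the second relies on the substitution commutation of the first. Throughout, $|\t|$ denotes the translated assignment, defined by $|\t|(Z)=|\t(Z)|$; as $\t(Z)$ has the form $\l x_1\ldots\l x_n s$, this is $\l x_1\ldots\l x_n|s|$. I would prove the first item, $|t\t|=|t||\t|$, by structural induction on the meta-term $t$. The variable case and the function-symbol case for $\ff\neq\l_T^U$ are immediate from the defining clauses of $|\cdot|$ and of assignment application, which both act homomorphically there, together with the induction hypothesis on the arguments. The clause for $\l_T^U$ is equally direct: from $(\l_T^U(\l x s))\t=\l_T^U(\l x(s\t))$ one gets $|(\l_T^U(\l x s))\t|=\l x|s\t|=\l x(|s||\t|)$ by the induction hypothesis, which matches $(\l x|s|)|\t|=|\l_T^U(\l x s)||\t|$. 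The only case needing extra work is the meta-variable case $t=Z(t_1,\ldots,t_n)$: writing $\t(Z)=\l x_1\ldots\l x_n s$, the left-hand side is $|s\{x_i\to t_i\t\}|$, whereas the right-hand side unfolds to $|s|\{x_i\to|t_i||\t|\}$.

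To close this gap I would first establish, by a separate and simpler structural induction on $s$, that $|\cdot|$ commutes with ordinary (variable) substitution, $|s\{x_i\to u_i\}|=|s|\{x_i\to|u_i|\}$. Every case of this sub-induction is a mechanical unfolding: the abstraction and $\l_T^U$ clauses both send $\l x(\cdot)$ to $\l x(\cdot)$ and so commute with a capture-avoiding substitution of the remaining variables (using the standard convention that bound variables are fresh for the substituted terms), and the meta-variable clause is homomorphic. Granting this sub-lemma and the induction hypothesis $|t_i\t|=|t_i||\t|$, the meta-variable case of the first item follows from $|s\{x_i\to t_i\t\}|=|s|\{x_i\to|t_i\t|\}=|s|\{x_i\to|t_i||\t|\}$, which completes the first item.

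For the second item I would argue by induction on the derivation of the one-step rewrite $t\a_\cS u$, that is, on the depth of the contracted redex. In the base (root) case $t=l\t$ and $u=r\t$ for some $l\a r\in\cS$; the first item gives $|t|=|l||\t|$ and $|u|=|r||\t|$, and since $|l|\a|r|\in|\cS|$ by definition of $|\cS|$ and $\a_{|\cS|}$ is closed under assignments, $|l||\t|\a_{|\cS|}|r||\t|$, i.e. $|t|\a_{|\cS|}|u|$. In the context cases the redex lies strictly inside an argument of some symbol $\ff$ (possibly inside the body of an abstraction argument): the induction hypothesis yields the corresponding step on the translated subterm, which is then re-embedded using closure of $\a_{|\cS|}$ under the matching context in $|I|$; for $\ff=\l_T^U$ the enclosing context $\l_T^U(\l x\,\Box)$ simply translates to the abstraction context $\l x\,\Box$, so the step is preserved.

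I expect the sub-lemma, and more precisely the interaction between the non-homomorphic erasure of $\l_T^U$ and substitution, to be the crux: one must verify that turning the object-level term $\l_T^U(\l x\,s)$ into the CRS-level abstraction $\l x\,|s|$ is compatible with both variable substitution and meta-variable instantiation without introducing capture. This is the standard bookkeeping behind such erasure translations and is handled by the freshness convention on bound variables, but it is exactly where the argument must be stated with care; everything else reduces to a routine unfolding of the defining clauses.
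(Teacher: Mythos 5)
The paper states this lemma without proof, treating it as routine bookkeeping, and your argument supplies exactly the proof it implicitly relies on: structural induction on $t$ for the first item (with the commutation of erasure with ordinary variable substitution, $|s\{x_i\to u_i\}|=|s|\{x_i\to|u_i|\}$, isolated as the sub-lemma needed for the meta-variable case), followed by induction on the position of the contracted redex for the second, using the first item at the root and closure under contexts elsewhere. Your proof is correct, including the observation that the only delicate point is the interaction between the non-homomorphic clause $|\l_T^U(\l x\,s)|=\l x|s|$ and capture-avoiding substitution, which the standard freshness convention handles; note also that in the intended application only $\l_T^U$ has a second-order type, so your four-clause case analysis is exhaustive.
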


Note that $@_T^U(\l_T^U(\l xZ(x)),X)$ is translated into $@_T^U(\l
xZ(x),X)$. Hence, if $I$ has symbols $@_T^U:Arr(T,U)\A T\A U$ and
rules $@_T^U(\l_T^U(\l xZ(x)),X)$, then $|I|$ is a {\em $\b$-IDTS} and
$\a_{\ps\b\cup\cS}$ terminates if $|\cS|$ satisfies the General Schema
{}\cite{blanqui00rta}.

\comment{
An IDTS term $t$ can be translated back into a simply-typed $\l$-term
$|t|$ as follows:

\begin{itemize}
\item
$|x|=x$,
\item
for $\ff:T_1\A\ldots\A T_n\A\fB$ and $U_i=T_{i+1}\A\ldots\A
T_n\A\fB$,\\ $|\l_\ps{T_{k+1}}^\ps{U_{k+1}}(\l
x_{k+1}\ldots\l_\ps{T_n}^\ps{U_n}(\l
x_n\ff(t_1,\ldots,t_k,x_{k+1},\ldots,x_n))...)|=\ff|t_1|\ldots|t_k|$,
\item
$|\l_\ps{T}^\ps{U}(\l xu)|=\l x^T|u|$,
\item
$|@_\ps{U}^\ps{V}(t,u)=|t||u|$.
\end{itemize}

\begin{lemma}
\begin{itemize}
\item
For all $t$ and $\t$, $|\ps{t}\t|=t|\t|$.
\item
If $t\a_{\ps\b\cup\ps\cR}u$ then $|t|\a_{\b\cup\cR}|u|$.
\end{itemize}
\end{lemma}
}


\section{Validity of $\b$}
\label{sec-beta}

Using the interpretation of $@_T^U$ and $\l_T^U$ in Section \ref{sec-ho}:

\begin{lemma}
\label{lem-beta-valid}
If $(M,\mu)$ is an $\cF$-monoid, then $\ps\b$ is valid in $M$.
\end{lemma}

\begin{proof}
Let $l$ and $r$ be the left and right hand-sides of the rule $\b_T^U$,
$\t:\cZ\a I^\vide$ and $\G$. Assume that $\t(Z)=\lx u$ and
$\t(X)=t$. Then, $l\t=@_T^U(\l_T^U(\lx u),t)$ and $r\t=u_x^t$, and
$!^\cM_U(\G)(l\t)=(@_T^U)^\cM(\G)(\ku,\kt)=\mu_U(\G)(\ku,\vp\kt)$ and
$!^\cM_U(\G)(r\t)=!^\cM_U(\G)(u_x^t)$, where $\ku=!^\cM_U(\G,x:T)(u)$
and $\kt=!^\cM_T(\G)(t)$. We now prove by induction on $u$ that, for
all $\G$, $L=\mu_U(\G)(\ku,\vp\kt)$ is equal to
$R=!^\cM_U(\G)(u_x^t)$.

\begin{itemize}
\item $u=x$.
Then, $u_x^t=t$, $\ku=p_{n+1}$ and $L=\kt=R$.
\item $u=\ff(\l\vx_1u_1,\ldots,\l\vx_pu_p)$
with $\ff:(\vB_1\A B_1)\A\ldots\A(\vB_p\A B_p)\A B$. Then,
$u_x^t=\ff(\l\vx_1u_1{}_x^t,\ldots,\l\vx_pu_p{}_x^t)$,
$R=\ff^\cM(\G)(\va)$ where $a_i=!^\cM_{B_i}(\G+\vB_i)(u_i{}_x^t)$, and
$\ku=\ff^\cM(\G+T)(\vu^*)$ where
$u_i^*=!^\cM_{B_i}(\G+T+\vB_i)(u_i)$. Since $M$ is an $\cF$-monoid,
$L=\ff^\cM(\G)(\vb)$ where
$b_i=\mu_{B_i}(\G+T+\vB_i)(u_i^*,\vv_i)$. And, by induction
hypothesis, we have $b_i=a_i$.\qed
\end{itemize}
\end{proof}

\begin{lemma}
\label{lem-monoid}
$(M,\mu)$ is an $\cF$-monoid if
$\ff^\cM(\G)(\vx)(\vy)=\ff^\cM(\vide)(x_1(\vy),\ldots,x_n(\vy))$.
\end{lemma}

\begin{proof}
Let $\ff:(\vB_1\A B_1)\A\ldots\A(\vB_n\A B_n)\A B$ and
$\G_i=\G+\vB_i$. We have to prove that
$L=\mu_B(\G)(\ff^\cM(\D)(\vt),\vu)$ is equal to
$R=\ff^\cM(\G)(\mu_{B_1}(\G_1)(t_1,\vv_1)$, \ldots,
$\mu_{B_n}(\G_n)(t_n,\vv_n))$, where $v_{i,j}=up_\G^{\vB_i}(u_j)$ if
$j<|\D|$, and $v_{i,j}=|\G|+j-|\D|$ otherwise.

Let $y_i\in N_{\G(i)}$. We have $L(\vy)=\ff^\cM(\D)(\vt)(\vu')$ where
$u_j'=u_j(\vy)$. Now, by assumption, $L(\vy)=\ff^\cM(\vide)(\va)$
where $a_i=t_i(\vu')$, and $R(\vy)=\ff^\cM(\vide)(\vb)$ where
$b_i=\mu_{B_1}(\G_1)(t_i,\vv_i)(\vy)=t_i(\vv_i')$ and
$v'_{i,j}=v_{i,j}(\vy)$. Hence, $L(\vy)=R(\vy)$ since
$v'_{i,j}=u_j(\vy)=u_j'$.\qed
\end{proof}


\comment{
\section{Example of non-constructor system}
\label{sec-non-constr}

Assuming that $\fA$ is the $\A$-type constructor, then the expression
$\fF nuv$ represents the set of $n$-ary functions from $u$ to $v$.

\begin{center}
\begin{tabular}{cc}
\begin{rew}
+\mf{0}y & y\\
+(\fs x)y & \fs(+xy)\\
+(\fs x)y & +x(\fs y)\\
+(+xy)z & +x(+yz)\\
\end{rew}
&\quad
\begin{rew}
\fF\mf{0}uv & v\\
\fF(\fs x)uv & \fA u(\fF xuv)\\
\fF(+xy)uv & \fF xu(\fF yuv)\\
\end{rew}
\end{tabular}
\end{center}

We take:

\begin{itemize}
\item
$+^\cA(x,y)=2x+y+1$ and $\z_+(x,y)=2x+y+1$
\item
$\fF^\cA=\infty$ and $\z_\fF(x,u,v)=x$
\end{itemize}

The interpretation of $\fF^\cM$ is well-defined since $2x+y+1\le a$
implies both $x<a$ and $y<a$, and the labelled system that we obtain
is precedence-terminating:

\begin{rewc}
+_{y+1}\mf{0}y & y\\
+_{2x+y+3}(\fs x)y & \fs(+_{2x+y+1})\\
+_{2x+y+3}(\fs x)y & +_{2x+y+2}x(\fs y)\\
+_{4x+2y+z+3}(+_{2x+y+1}xy)z & +_{2x+2y+z+2}x(+_{2y+z+1}yz)\\
\fF_{0}\mf{0}uv & v\\
\fF_{x+1}(\fs x)uv & \fA u(\fF_{x}xuv)\\
\fF_{2x+y+1}(+_{2x+y+1}xy)uv & \fF_{x}xu(\fF_{y}yuv)\\
\end{rewc}
}

\end{document}